\DeclareSymbolFont{largesymbolsA}{U}{txexa}{m}{n}
\DeclareMathSymbol{\bigsqcupplus}{\mathop}{largesymbolsA}{"02}
\newcommand{\bigfrac}[2]{\frac{\raisebox{1ex}{$#1$}}{\raisebox{-1.5ex}{$#2$}}}
\newcommand{\forget}[1]{}
\newcommand{\redt}[1]{{\color{red}#1}}
\title{Decidability of Liveness on the TSO Memory Model}
\author[1]{Chao Wang}
\author[2]{Gustavo Petri}
\author[3]{Yi Lv}
\author[4]{Teng Long}
\author[1,5]{Zhiming Liu}
\affil[1]{Southwest University, China}
\affil[2]{Arm Research}
\affil[3]{Institute of Software, Chinese Academy of Sciences}
\affil[4]{China University of Geosciences}
\affil[5]{Northwest Ploytechnical University, China}
\author {Chao Wang\inst{1} \and Gustavo Petri\inst{2} \and Yi Lv\inst{3} \and Teng Long \inst{4}
\and Zhiming Liu\inst{1,5}}
\institute{
  Southwest University, China
  \and
  Arm Research
  \and
  Institute of Software, Chinese Academy of Sciences
  \and
  China University of Geosciences
  \and
  Northwest Ploytechnical University, China
}
\begin{document}

\maketitle


\begin{abstract}
  An important property of concurrent objects is whether they support
  progress -- a special case of liveness -- guarantees, which ensure the
  termination of individual method calls under system fairness assumptions.
  Liveness properties have been proposed for concurrent objects. Typical
  liveness properties include \emph{lock-freedom}, \emph{wait-freedom},
  \emph{deadlock-freedom}, \emph{starvation-freedom} and
  \emph{obstruction-freedom}.
  It is known that the five liveness properties above are decidable on the
  Sequential Consistency (SC) memory model for a bounded number of processes.
  However, the problem of decidability of liveness for finite state concurrent
  programs running on relaxed memory models remains open.
  In this paper we address this problem for the Total Store Order (TSO) memory
  model, as found in the x86 architecture.
  We prove that lock-freedom, wait-freedom, deadlock-freedom and
  starvation-freedom are undecidable on TSO for a bounded number of processes,
  while obstruction-freedom is decidable.


  \forget{
  We investigate the verification of \emph{bounded wait-freedom},
  a bounded version of wait-freedom,
  and prove that, given a fixed bound $k$, the problem of checking bounded
  wait-freedom is decidable on TSO for bounded number of processes.
  Since wait-freedom is undecidable on TSO, but for each bound $k$
    bounded wait-freedom is decidable, a conjecture is that there exists
    a wait-free concurrent library on TSO for which there is no bound on the
    number steps of one or more of its methods.
    We demonstrate this conjecture by means of an example library.
    By contrast, we prove that such property does not hold on SC.
    }

\forget{
Various progress properties have been proposed for concurrent data structures.
Typical progress properties include lock-freedom, wait-freedom, and obstruction-freedom.
We address the problem of liveness verification for finite-state concurrent program running on TSO memory model, which is the memory model of intel X86 architecture.

We prove that lock-freedom and wait-freedom is undecidable on TSO memory model for bounded number of processes. The undecidability result is obtained by reducing a known undecidable problem, post correspondence problem (PCP), into checking if a lossy channel machine has a specific infinite executions, and further reduce this problem into lock-freedom (resp., wait-freedom) of a specific concurrent data structure. The reduction is based on simulating lossy channel machine with two collaborative processes.

We also prove that obstruction-freedom is decidable on TSO memory model. The decidability result is obtained by reducing obstruction-freedom into whether a finite path contains a configuration of specific control state and memory valuation.}
\end{abstract}

\forget{
\noindent Keywords: weak memory model, $\textit{linearizability}$,
$\textit{TSO-to-TSO linearizability}$
}


\forget{
\noindent Keywords: weak memory model, $\textit{linearizability}$,
$\textit{TSO-to-TSO linearizability}$
}

\section{Introduction}
\label{sec:introduction}

A concurrent object provides a set of methods for client programs to access the object.
Given the complexity of writing efficient concurrent code, it is recommended to
use mature libraries of concurrent objects such as
\emph{java.util.concurrent} for Java and \emph{std::thread} for C++11.
The verification of these concurrent libraries is obviously important but intrinsically hard, since they
are highly optimized to avoid blocking -- thus exploiting more parallelism -- by
using optimistic concurrency in combination with atomic instructions like
compare-and-set.
%

Various liveness properties (progress conditons) have been proposed for concurrent objects.
Lock-freedom, wait-freedom, 
deadlock-freedom, starvation-freedom and obstruction-freedom \cite{DBLP:books/daglib/0020056,DBLP:conf/concur/LiangHFS13} are five typical liveness properties.
A liveness property describes conditions under which method calls are guaranteed to successfully complete in an execution.
Intuitively, clients of a wait-free library can expect each method call to
return in finite number of steps.
Clients using a lock-free library can expect that at any time, at least one
library method call will return after a sufficient number of steps, while it
is possible for methods on other processes to never return.
Deadlock-freedom and starvation-freedom require each fair execution to satisfy
lock-freedom and wait-freedom, respectively.
Clients using an obstruction-free library can expect each method call to
return in a finite number of steps when executed in isolation.
\forget{ For
  example, 
  lock-freedom requires that 
  a call to a library method should return if the system executes for a
  sufficient number of steps, while wait-freedom requires that each call to a
  library method should return if it is scheduled for a sufficient number of
  steps.
%
%
%
Deadlock-freedom and starvation-freedom assume fair schedulers. Deadlock-freedom
(resp., starvation-freedom) requires that each fair execution must satisfy
lock-freedom (resp., wait-freedom). Obstruction-freedom requires that calls to
library methods which execute in isolation should return in finite number of
steps.
}


It is often that programmers assume that all accesses to the shared memory are
performed instantaneously and atomically, which is guaranteed only by the
sequential consistency (SC) memory model \cite{DBLP:journals/tc/Lamport79}.
However, modern multiprocessors (e.g., x86 \cite{Intel2021}, ARM \cite{ARMv8}
), and programming languages (e.g., C/C++
\cite{DBLP:conf/popl/BattyOSSW11}, Java \cite{DBLP:conf/popl/MansonPA05}) do not
implement the SC memory model. Instead they provide {\em relaxed memory models},
which allow subtle behaviors due to hardware and compiler optimizations.
For instance, in a multiprocessor system implementing the Total Store Order (TSO) memory model \cite{DBLP:conf/popl/AtigBBM10}, each processor is equipped with a FIFO store buffer.
In this paper we follow the 
TSO memory model of~\cite{DBLP:conf/popl/AtigBBM10}
(similarly
to~\cite{DBLP:conf/tphol/OwensSS09,DBLP:conf/esop/BouajjaniDM13,DBLP:conf/esop/BurckhardtGMY12}). Although in every realistic
multiprocessor system implementing the TSO memory model, the buffer is of bounded size, to describe
the semantics of \emph{any} TSO implementing system it is necessary to consider unbounded size FIFO
store buffers associated with each process, as in the semantics of~\cite{DBLP:conf/popl/AtigBBM10}. Otherwise, TSO implementations with larger store buffer will not be captured by this theoretical TSO memory model.
Any write action 
performed by a processor is put into its local store buffer
first and can then be flushed into the main memory at any time.
Some libraries 
are optimized for relaxed memory models.
For example, some work-stealing queue implementations
\cite{DBLP:conf/oopsla/LeijenSB09,DBLP:conf/ppopp/MichaelVS09} are specifically
written to perform well on TSO.

To address the problem of decidability of liveness properties, we remark that
concurrent systems with a bounded number of processes on SC can be expressed
as 
finite state $\textit{labelled transition systems}$ (LTS).
Lock-freedom, wait-freedom and obstruction-freedom can be expressed as LTL
formulas, as shown in \cite{DBLP:conf/pldi/PetrankMS09}. We show that
deadlock-freedom and starvation-freedom can 
be expressed as CTL$^*$ 
formulas.
Given that LTL and CTL$^*$ model checking is decidable 
\cite{DBLP:reference/mc/2018},
it is known that lock-freedom, wait-freedom, deadlock-freedom,
starvation-freedom and obstruction-freedom 
are decidable in this case.
However, their decidability problem on TSO memory model for a bounded number of
processes remains open. 
\forget{
{\color {red}
Liveness is an important property of programs, and {\color {blue} using
  libraries with unsuitable liveness property may cause unintended problematic
  behaviors.}
  } }


In this paper, we study the decision problem of liveness properties on TSO.
Our work covers the five typical liveness properties of
\cite{DBLP:books/daglib/0020056}, which are commonly used in practice.
Our main findings are:
\begin{itemize}
\item[-] Lock-freedom, wait-freedom, deadlock-freedom and starvation-freedom are
  \emph{undecidable on TSO}, which reveals that the verification of liveness properties
  on TSO is intrinsically harder when compared to their verification on SC.

\item[-] Obstruction-freedom is \emph{decidable on TSO}.

\end{itemize}

To the best of our knowledge, ours are the first decidability and
undecidability results for liveness properties 
on relaxed memory models.
Let us 
now present a sketch of the techniques used in the paper to justify our
findings.
\forget{
{\color{orange} GP: I think the text that follows is too detailed for an
  introduction: While the machines haven't been defined, the reader is required
  to imagine all these executions with $M_1$ and $M_2$ without even knowing what
  they represent. I would postpone this to following sections. Perhaps a couple
  of paragraphs summarizing what's here could work, but there is far too much now.}
}

\forget{
\begin{tabular}{|l|l|l|}
\hline
properties & decidability/answer on TSO & decidability/answer on SC \\
\hline
lock-freedom & $\times$ & $\checkmark$ {\color {red} TODO:add reference} \\
\hline
wait-freedom & $\times$ & $\checkmark$ {\color {red} TODO:add reference} \\
\hline
deadlock-freedom & $\times$ & $\checkmark$ {\color {red} TODO:add reference} \\
\hline
starvation-freedom & $\times$ & $\checkmark$ {\color {red} TODO:add reference} \\
\hline
obstruction-freedom & $\checkmark$ & $\checkmark$ {\color {red} TODO:add reference} \\
\hline
$k$-bounded wait-freedom & $\checkmark$ & $\checkmark$ \\
\hline
Does each wait-freedom library has \\ a bound ($n$ processes)? & $\times$ & $\checkmark$ \\
\hline
\end{tabular}
}

\vspace{5pt}
\noindent \textbf{Undecidability Proof.}
Abdulla \emph{et al.} \cite{DBLP:journals/iandc/AbdullaJ96a} reduce the cyclic
post correspondence problem (CPCP) \cite{DBLP:journals/acta/Ruohonen83}, a known
undecidable problem, into checking whether a specific lossy channel machine has
an infinite execution that visits a specific state infinitely often.
Our undecidability proof of 
lock-freedom and wait-freedom is obtained by reducing the checking of
the lossy channel machine problem into checking lock-freedom and wait-freedom for a specific library,
based on a close connection of concurrent programs on TSO and lossy
channel machines \cite{DBLP:conf/popl/AtigBBM10,DBLP:conf/atva/WangLW15}.

We 
generate a library template that can be instantiated as a specific library for each instance of CPCP. 
The collaboration between methods simulates lossy channel machine transitions.
Each execution of the lossy channel machine contains (at most) two
phases. 
Each \emph{accepting} infinite execution of the lossy channel machine loops
infinitely in 
the second phase. Thus, we make each method of the library to work differently
depending on the phase.
Our library has the following features: if an infinite library execution simulates an accepting infinite
execution of the lossy channel machine, then it violates lock-freedom, and thus,
it also violates wait-freedom; if an infinite library execution does not simulate an
accepting infinite execution of the lossy channel machine, then it satisfies
wait-freedom, and thus, it also satisfies lock-freedom. This is because any
execution that satisfies wait-freedom also satisfies lock-freedom.
Therefore, we reduce checking whether the lossy channel machine has an
accepting infinite execution, or more precisely CPCP, into checking lock-freedom
and wait-freedom of the library.

\forget{
We then generate a library template that can be instantiated as a specific library for each CPCP.
This library contains two methods $M_1$ and $M_2$, and the collaboration of $M_1$ and $M_2$ simulates the lossy channel machine transitions.
Each execution of the lossy channel machine contains (at most) two phases: a \emph{guess phase} and a \emph{check phase}.
In a library execution that successfully simulates a lossy channel machine transition, there should be two processes. Process $1$ (resp., $2$) runs $M_1$ (resp., $M_2$).
$M_1$ does not return until the simulation procedure ends, while
$M_2$ returns several times (to simulate several receive operations) during the guess phase, but does not return until the simulation procedure ends in the check phase.
}

\forget{
Each accepting infinite execution of the lossy channel machine loops infinitely in check phase. 
The lossy channel machine has an accepting infinite execution
$t_1$, if and only if the library has an accepting infinite execution
$t_2$ which simulates $t_1$. Hence, $t_2$ violates lock-freedom since $M_1$
never returns and the $M_2$ invoked in the check phase does not return either.
On the other hand, if a library execution fails to simulate lossy channel
machine transitions, then $M_1$ and $M_2$ will go to trap state.
From that point onward,  $M_1$ and $M_2$ both return trivially, and hence such executions
satisfy lock-freedom.
If a library execution finishes simulating a finite channel machine execution,
it immediately satisfies lock-freedom.
}
\forget{
{\color {blue}This library contains two methods $M_1$ and $M_2$, and the collaboration of $M_1$ and $M_2$ simulates the lossy channel machine transitions.
In a library execution that successfully simulates lossy channel machine transitions, $M_1$ runs on process $P_1$, and it does not return until the lossy channel machine execution ends, while $M_2$ runs on process $P_2$, and it behaviors differently in two phase.
In guess phase, to simulate one lossy channel transition, $M_2$ returns several times.
In check phase, $M_2$ does not return until the lossy channel machine execution ends.
Therefore, the lossy channel machine has an infinite execution $t_1$ that visits a specific state $s_1$ infinite often, if and only if the library has an infinite length execution $t_2$ which simulates $t_1$. In $t_2$, $M_1$ never return, and $M_2$ does not return after it goes to check phase, and thus, $t_2$ violates lock-freedom.
On the other hand, if a library execution fails to simulate lossy channel machine transitions, then $M_1$ and $M_2$ will go to trap state, and from then on $M_1$ and $M_2$ becomes a method that returns trivially, and such executions satisfy lock-freedom.}
}

Perhaps surprisingly, the same library can be used to show the
undecidability of deadlock-freedom (resp., starvation-freedom), which requires
that each infinite fair execution satisfies lock-freedom (resp.,
wait-freedom). This is because whenever a library execution simulates an
accepting infinite execution of the lossy channel machine, we require 
library methods to collaborate and work alternatingly, and thus, such
library execution must be fair and violates deadlock-freedom, and thus,
violates starvation-freedom. Therefore, checking the existence of accepting
infinite executions of the lossy channel machines is reduced into checking
violations of deadlock-freedom and starvation-freedom.

\forget{
Perhaps surprisingly, the same library can be used to show the undecidability of
deadlock-freedom, which requires that each fair execution satisfies lock-freedom.
The reason is that whenever a library execution simulates an infinite lossy
channel machine execution, we require $M_1$ and $M_2$ to collaborate and work
alternatingly, and thus, such library execution must be fair. Therefore, the
accepting infinite executions of the library (if any) are fair and violate deadlock-freedom, and checking their existence is reduced to checking deadlock-freedom.
}

\forget{
When compared to lock-freedom, deadlock-freedom additionally requires fair
sche-duling.
Perhaps surprisingly, the same library can be used to show the undecidability of
deadlock-freedom.
It is known that lock-freedom implies deadlock-freedom, and to show that deadlock-freedom of this library implies lock-freedom, we need to show that every lock-freedom violation must be fair. Each lock-freedom violation needs to simulate an infinite execution of the lossy channel machine. 
  To simulate the transitions of this lossy channel machine, 
  we require $M_1$ (resp., $M_2$) to repeatedly read from a memory location $y$ (resp., $x$) and write to memory location $x$ (resp., $y$). In this procedure, the processes of $M_1$ and $M_2$ execute alternatingly, and such procedure must be fair.
}


\forget{
Wait-freedom imposes stronger requirements than lock-freedom. 
For the library above, in most cases, a library execution that satisfies
lock-freedom implies that it also satisfies wait-freedom. The only exception are library executions that simulate an infinite execution of the lossy channel machine that always loops in the guess phase, and such executions satisfy lock-freedom but violate wait-freedom.
To make executions in this case satisfy wait-freedom, we modify $M_1$ by making it
return as long as it simulates one lossy channel machine transition.
Then, when simulating a lossy channel machine execution, an
execution of the new library satisfies wait-freedom, if and only if an
execution of the previous library satisfies lock-freedom.
Hence, we can prove the undecidability of wait-freedom and starvation-freedom following the argument used to prove lock-freedom and deadlock-freedom.
}

\forget{
Wait-freedom requires that each method returns in a finite number of steps.
In the library above, the library execution that simulates an
  infinite execution of the lossy channel machine that always loops in
  the guess phase satisfies lock-freedom but violates wait-freedom.
  To make such an execution satisfy wait-freedom,
  {we modify $M_1$ by making it
return as long as it simulates one lossy channel machine transition.
}
With this new library we can prove the undecidability of wait-freedom and
starvation-freedom following the argument used to prove
  lock-freedom and deadlock-freedom.\\
}

\vspace{5pt}
\noindent \textbf{Decidability Proof.}
We introduce a notion called \emph{blocking pair}, coupling a process
control state and a memory valuation, and capturing a time point from which
we can generate an infinite execution on the SC memory model, for
which eventually one process runs in isolation and does not perform any
return. We reduce checking obstruction-freedom into 
the state reachability problem, a known decidable problem
\cite{DBLP:conf/popl/AtigBBM10}, for configurations that ``contain a blocking
pair'' and have an empty buffer for each process.
There are two difficulties here:
\begin{inparaitem}
\item[] firstly, the TSO concurrent systems of \cite{DBLP:conf/popl/AtigBBM10} do
  not use libraries; and
\item[] secondly, the state reachability problem requires the buffer of each
  process to be empty for the destination configuration, while such
  configuration may not exist in a obstruction-freedom violation.
\end{inparaitem}

The first difficulty is addressed by making each process repeatedly call
an arbitrary method with an arbitrary arguments for an arbitrarily number of
times, while transforming each call and return action into internal actions.
To solve the second difficulty, we show that each obstruction-freedom
violation has a prefix reaching a configuration that ``contains a blocking
pair''. By discarding specific actions of the prefix execution and forcing
some flush actions to happen, we obtain another prefix execution reaching a
configuration that both ``contains a blocking pair'' and has an empty buffer
for each process.

\forget{
For this proof we introduce a notion called \emph{blocking pair}, coupling a process
control state and a memory valuation, and capturing a time point from which
we can generate an infinite execution on the SC memory model, for
which eventually one process runs in isolation and does not perform any
return. With this notion, we reduce checking obstruction-freedom into a
reachability problem of configurations that ``contain a blocking pair''.


Atig \emph{et al.} \cite{DBLP:conf/popl/AtigBBM10} prove that the
state reachability problem of TSO concurrent systems is decidable. They
consider systems where several processes run concurrently on TSO, however they
do no consider the case of processes accessing library methods. Our
decidability result of obstruction-freedom is obtained by reducing checking
blocking pairs into the state reachability problem of
\cite{DBLP:conf/popl/AtigBBM10}. There are two difficulties here:
\begin{inparaitem}
\item[] firstly, the TSO concurrent systems of \cite{DBLP:conf/popl/AtigBBM10} do
  not use libraries; and
\item[] secondly, checking blocking pairs requires atomically reading the whole
  memory valuation, while the commands can only atomically read one memory
  location.
\end{inparaitem}

The first difficulty is addressed by making each process repeatedly
call an arbitrary method with an arbitrary argument for arbitrarily many
times, while transforming each call and return action into internal actions.
To solve the second difficulty, a process can non-deterministically start to
read the memory valuation atomically in two phases. In the first phase, it
marks each memory location with special values. In the second phase, it
ensures the value of each memory location has not been overwritten by other
processes or memory model actions, and then checks the value of memory
locations. When a process finishes this check procedure, the result is stored in
a 
memory location 
and the whole system stops. Therefore, checking
blocking pairs is reduced into checking if a specific configuration 
is
reachable.
}


\forget{
We slightly extend the TSO semantics to capture the first
  occurrence of a blocking pair. To that end, we need to be able to read 
  the current memory valuation and the current store buffer atomically.
Our approach for checking obstruction-freedom on TSO for a bounded number of
processes is divided into two steps:
\begin{inparaenum}[(1)]
\item in the first step, {we reduce checking obstruction-freedom on the original TSO semantics into checking
  if some configuration that ``contains a blocking pair'' is reachable in the extended TSO semantics.}
  Since the number of blocking pairs is finite, we observe that there are
    only a finite number of finite execution problems of this form.
\item in the second step, we reduce the finite execution problem above into a control
state reachability problem of a lossy channel machine, which is known decidable.
}

\forget{
{A lossy channel machine \redt{$\textit{CM}_i$} is then constructed, which simulates the behavior
of process $P_i$ in the extended TSO semantics.}
\redt{$\textit{CM}_i$} contains only one channel to store the pending write actions according to
the total store orders {in the extended TSO semantics.} 
Then, the finite execution problem can be reduced to a control state
reachability problem of the product of lossy channel 
machines \redt{$\textit{CM}_1^w,\ldots,\textit{CM}
_n^w$}.
Each \redt{$\textit{CM}_i^w$} is obtained from \redt{$\textit{CM}_i$} by replacing all of its transitions except
for write and \textit{cas} with internal transitions.
We require that each write in a channel contains a run-time snapshot of
the memory, while always keeping bounded the amount of information that needs to
be stored as in a perfect channel.
}

\forget{
With these specialized lossy channels, missing some intermediate channel
contents would not break the reachability between control states under perfect
channels. The reason is that when an item is flushed, its whole snapshot is used to update the memory valuation of \redt{$\textit{CM}_i$}. The result of this flush action will not be influenced if several previous items are missing.
}


\forget{
Given a bound 
$k$, we reduce checking bounded wait-freedom on TSO into the state reachability problem of the concurrent system of \cite{DBLP:conf/popl/AtigBBM10}.
Since the concurrent system of \cite{DBLP:conf/popl/AtigBBM10} does not consider call and return actions, in the reduced concurrent system we consider call and return actions as internal 
actions.
To capture bounded wait-freedom violation, for each process we record if some method of this process has executed more than $k$ steps without return. Since the state reachability problem of the concurrent system of \cite{DBLP:conf/popl/AtigBBM10} is decidable, bounded wait-freedom is decidable.
}

\forget{
Given a bound 
$k$, our verification of bounded wait-freedom is divided into two steps.
We reduce checking bounded wait-freedom on TSO into checking whether there
exists a finite execution, on which a method runs $k+1$ steps without return.
We then reduce the latter problem into a control state reachability problem of a
lossy channel machine, similarly to that of obstruction-freedom.\\
}

\forget{
\noindent \textbf{Bound in Wait-Freedom.}
Since wait-freedom is undecidable on TSO while bounded wait-freedom is decidable
on TSO for each bound $k$, one direct conjecture is that for some wait-free
library on TSO with a bounded number of processes, there
is no bound on the numbers of steps of a method from its call to its return. We demonstrate this conjecture by providing an example
library of a concurrent data structure, which is the library used in undecidability proof of wait-freedom on TSO.

In the guess phase the lossy channel machine guesses a solution of CPCP by nondeterministically inserting elements into channel,
and in the check phase the lossy channel machine repeatedly checks whether the current channel content is a solution to CPCP.
Given a bound $k$, we can generate a specific CPCP that has no solution, and its
lossy channel machine contains an execution of the following form: in the guess
phase, the lossy channel machine guesses a long enough candidate solution. Then,
in the check phase, the first round of check is passed by losing channel
content, while the second round of check fails. Additionally, we require that at
the end of the first round of check, there are more than $k$ elements in the channels.
We can generate a library for this lossy channel machine, which is wait-free
since this CPCP has no solution. Since TSO uses unbounded buffers, there is a
library execution for the lossy channel machine above, and 
the invocation of $M_1$ that simulates the first lossy channel machine
transition in the check phase will run for more than $k$ steps.
Such library execution violates bounded wait-freedom.
By contrast, we prove that each wait-free library on SC has a bound for bounded number of processes.
}

\forget{The reason is as follows:
during the guess phase the lossy channel machine guesses a solution of CPCP and inserts it into channel,
and in the check phase the lossy channel machine repeatedly checks if the channel content contains a solution to CPCP.
Given a library for CPCP that has no solution, and given a bound value $k$, we could guess a long enough candidate solution.
Then, in the check phase we could make the check pass for one round by losing
some ``channel content'', and fail in the next round, while in the end of the first round, there should be more than $k$ elements in channel.
The library for such CPCP is wait-free, and the first $M_1$ of the check phase could run for more than $k$ steps.
We should note that we assume the TSO buffer size to be unbounded.
By contrast, we prove that each wait-free library on SC has a bound for bounded number of processes. \\
}



\forget{
Two important properties of concurrent data structures are correctness and liveness.
The correctness property concern if the behaviors of a concurrent object conform to a better understandable sequential specification in some manner.
$\textit{Linearizability}$ \cite{Herlihy:1990} is accepted as a \emph{de facto} correctness condition 
on the sequential consistency (SC) memory model \cite{Lamport:1979}.
The liveness property concern the condition under which method calls are guaranteed to successfully complete in an execution.
Wait-freedom, lock-freedom and obstruction-freedom \cite{Herilihy:2008,DBLP:Liang2013concur}are three typical liveness properties.
For example, lock-freedom guarantees that in a infinite execution, there is always a process that will return in finite steps.

Programmers usually assume that all accesses to the shared memory are performed instantaneously and atomically, which is guaranteed only by the SC memory model.
However, modern multiprocessors (e.g., x86 \cite{Owens:2009}, POWER \cite{Sarkar:2011}) and programming languages (e.g., C/C++ \cite{Batty:2011}, Java \cite{Manson:2005}) do not comply with the SC memory model.
As a matter of fact, they provide {\em relaxed memory models}, which allow subtle behaviors due to hardware or compiler optimization.
For instance, in a multiprocessor system implementing the total store order (TSO) memory model \cite{Owens:2009}, each processor is equipped with an FIFO store buffer.
Any write operation performed by a processor is put into its local store buffer first and can then be flushed into the main memory at any time.

Linearizability has been extensively studied for decidability \cite{Alur:1996,Bouajjani:2013,DBLP:Jad2015} and verification approach \cite{DBLP:Liang2010POPL,DBLP:Liang2013PLDI} on SC memory model, as well as variants of linearizability on relaxed memory model \cite{Sebastian:2012,Alexey:2012,Mark:2013} and their verifications \cite{Wang:2015,Wang:2015a,Derrick2014}. It is known that linearizability is decidable on SC for bounded number of processes \cite{Alur:1996}, but undecidable on TSO for bounded number of processes \cite{Wang:2015}.
The liveness are decidable on SC for bounded number of processes, however, the decidability of liveness on relaxed memory model remains open.
}

\vspace{5pt}
\noindent {\bf Related work.}
\noindent There are several works on the decidability of verification on TSO. Atig \emph{et al.} \cite{DBLP:conf/popl/AtigBBM10} prove that the state reachability problem is decidable on TSO while the repeated state reachability problem is undecidable on TSO. Bouajjani \emph{et al.} \cite{DBLP:conf/esop/BouajjaniDM13} prove that robustness is decidable on TSO.
Our previous work \cite{DBLP:conf/atva/WangLW15} proves that TSO-to-TSO linearizability \cite{DBLP:conf/esop/BurckhardtGMY12}, a correctness condition of concurrent libraries on TSO, is undecidable on TSO.
Our previous work \cite{DBLP:conf/sofsem/WangLW16} proves that a bounded version of TSO-to-SC linearizability
\cite{DBLP:conf/wdag/GotsmanMY12} is decidable on TSO.
None of these works address the decidability of concurrent library liveness on TSO.

Our approach for simulating executions of lossy channel machines with libraries is partly inspired by 
Atig \emph{et al.} \cite{DBLP:conf/popl/AtigBBM10}. 
However, Atig \emph{et al.} do not consider libraries, and their concurrent
programs do not have call or return actions. Our library needs to ensure that,
in each infinite library execution simulating an infinite execution of the lossy
channel machine, methods are ``fixed to process'', in other words, the same
method must run on the same process. The TSO concurrent systems of
\cite{DBLP:conf/popl/AtigBBM10} 
do not need to ``fix methods to processes'' since they record the control states
and transitions of each process.
Both Atig \emph{et al.} and our work store the lossy channel content in
the store buffer. However, when simulating one lossy channel machine
transition, methods of Atig \emph{et al.} only require to do one read or write
action, while methods of our paper require to read the whole channel content.
It appears that we can not ``fix methods to processes'' with methods in the
style of \cite{DBLP:conf/popl/AtigBBM10}, unless we use specific command to
directly obtain the process identifier. 

Our previous work \cite{DBLP:conf/atva/WangLW15}
considers safety properties of libraries. 
Both \cite{DBLP:conf/atva/WangLW15} and this paper 
use the collaboration of two methods to simulate one lossy channel machine transition. 
Our idea for simulating lossy channel machine transitions with libraries extends that of \cite{DBLP:conf/atva/WangLW15}, since each library constructed using the latter contains executions violating liveness, which makes such libraries not suitable for their reduction to liveness.
\forget{
The methods in \cite{DBLP:conf/atva/WangLW15}
induce infinite loops when methods do not observe some key update, and this may occur in simulating each lossy channel machine transition. Although this does not influence safety, when simulating the lossy channel machine of \cite{DBLP:journals/iandc/AbdullaJ96a}, this introduces ``false negatives'' to four liveness properties. In our paper, 
we modify the methods of \cite{DBLP:conf/atva/WangLW15} by exhausting the
buffered items in each method. This eliminates the ``false negatives''. 
}
The library of 
\cite{DBLP:conf/atva/WangLW15} contains a method that never returns and thus, do not need to consider ``fixing methods to processes''.

\forget{
Wang \emph{et al.} simulate lossy channel machines with libraries in a slightly different manner, and they consider safety properties and not liveness properties.
Atig \emph{et al.} do not consider libraries, and their concurrent
programs do not have call or return actions. 
They use a TSO
concurrent program with two processes to simulate a lossy
channel machine with one channel, while we use libraries running on five processes to simulate
a lossy channel machine with two channels. Interestingly, it appears that
one can not prove undecidability of liveness on TSO using lossy channel machine with one channel. We can transform a lossy channel machine $\textit{CM}$ with
multiple channels into a lossy channel machine $\textit{CM}'$ with one channel, 
and use several transitions of $\textit{CM}'$ to simulate one transition of
$\textit{CM}$. Such $\textit{CM}$ and $\textit{CM}'$ are even trace equivalent. However, it seems that $\textit{CM}$ and $\textit{CM}'$ can be distinguished by liveness properties.
$\textit{CM}'$ introduces additional infinite executions, say $t$, which repeatedly try to simulate one transition of $\textit{CM}$ but never succeed. When $\textit{CM}$ is the lossy channel machine with two channels in \cite{DBLP:journals/iandc/AbdullaJ96a}, 
$t$ introduces ``false negatives'' to liveness checking, since $t$ does not
simulate an accepting infinite execution of $\textit{CM}$, and the library
executions that simulate $t$ violate four of the liveness properties we consider. 
For this reason, 
we do not transform the lossy channel machine with two channels in \cite{DBLP:journals/iandc/AbdullaJ96a} into lossy channel machine with one channel in our undecidability proof.
%
}

Our previous work \cite{DBLP:conf/sofsem/WangLW16} verifies bounded TSO-to-SC linearizability by reducing it into another known decidable reachability problem, the control state reachability problem of lossy channel machines.
That work focuses on dealing with call and return actions across multiple processes, while our verification approach for obstruction-freedom 
considers call and return action as internal actions.

\forget{
Our approach of reducing checking 
obstruction-freedom 
into control state reachability of a lossy channel
machine is inspired by Atig \emph{et al.} \cite{DBLP:conf/popl/AtigBBM10}, which
reduces the state reachability of 
TSO concurrent system into control state reachability of
a lossy channel machine.
However, Atig \emph{et al.} do not mention libraries, and their lossy channel
machine 
cannot directly deal with blocking pairs, since capturing blocking pairs requires reading the memory valuation and buffers atomically. Our previous work
\cite{DBLP:conf/sofsem/WangLW16} reduces bounded TSO-to-SC linearizability
\cite{DBLP:conf/wdag/GotsmanMY12} into control state reachability of a lossy
channel machine.
Our previous work \cite{DBLP:conf/sofsem/WangLW16} focuses on dealing with call and return actions across multiple
processes, while in this paper the call and return actions are considered as internal 
actions.}


\forget{
{\color{orange} GP: I would leave the acknowledgements for the final version,
  and not the submission.}
{\color {red}\noindent \textbf{Acknowledgements:} We are grateful to an anonymous reviewer for his insightful suggestion, which greatly helped to simplify the proof of obstruction-freedom checking.}
}



\section{Concurrent Systems}
\label{sec:concurrent systems}


\subsection{Notations}

In general, a finite sequence on an alphabet $\Sigma$ is denoted $l=a_1 \cdot a_2 \cdot \ldots \cdot a_k$, where $\cdot$ is the concatenation symbol and $a_i\in\Sigma$ for each $1 \! \leq \! i \! \leq \! k$. Let $|l|$ and $l(i)$ denote the length and the $i$-th element of $l$, respectively, i.e., $|l|=k$ and $l(i)=a_i$ for $1 \! \leq \! i \! \leq \! k$.
Let $l(i,j)$ denote the string $l(i) \cdot \ldots \cdot l(j)$. 
Let $l \uparrow_{\Sigma'}$ denote the projection of $l$ on the alphabet $\Sigma'$. Given a function $f$, let $f[x:y]$ be the function that is the same as $f$ everywhere, except for $x$, where it has the value $y$. Let $\_$ denote an item, of which the value is irrelevant, and $\epsilon$ the empty word.

A $\textit{labelled transition system}$ (LTS) is a tuple $\mathcal{A}=(Q,\Sigma,\rightarrow,q_0)$, where $Q$ is a set of states, $\Sigma$ is an alphabet of transition labels, $\rightarrow\subseteq Q\times\Sigma\times Q$ is a transition relation and $q_0$ is the initial state.
A finite path of $\mathcal{A}$ is a finite  sequence of transitions
$q_0\xrightarrow{a_1}q_1\overset{a_2}{\longrightarrow}\ldots\overset{a_k}{\longrightarrow}q_k$ with
$k \! \geq \! 0$, and a finite trace of $\mathcal{A}$ is a finite sequence $t= a_1 \cdot a_2 \cdot
\ldots \cdot a_k$, with $k \! \geq \! 0$ if there exists a finite path
$q_0\overset{a_1}{\longrightarrow}q_1\overset{a_2}{\longrightarrow}\ldots\overset{a_k}{\longrightarrow}q_k$ of $\mathcal{A}$.
An infinite path of $\mathcal{A}$ is an infinite sequence of transitions
$q_0\xrightarrow{a_1}q_1\overset{a_2}{\longrightarrow}\ldots$, and correspondingly an infinite trace of $\mathcal{A}$ is an infinite sequence $t= a_1 \cdot a_2 \cdot \ldots$ if there exists an infinite path $q_0\overset{a_1}{\longrightarrow}q_1\overset{a_2}{\longrightarrow}\ldots$ of $\mathcal{A}$.

\subsection{Concurrent Objects 
and The Most General Client}

Concurrent objects 
are implemented as well-encapsulated libraries. 
The \emph{most general client} of a concurrent object 
is a program that interacts with the object, 
and is designed to exhibit all the possible behaviors of the object. 
A simple instance of the %
most general
client is a client 
that repeatedly makes non-deterministic method calls with non-deterministic arguments. 
Libraries may contain private memory locations for their own uses. For simplicity, and without loss
of generality, we assume that methods have only one argument and one return value (when 
they return).

Given a finite set $\mathcal{X}$ of memory locations, a finite set $\mathcal{M}$
of method names and a finite data domain $\mathcal{D}$, the set $\textit{PCom}$
of primitive commands
is defined by the following grammar:
\vspace{-3pt}
\[
  \begin{array}{lcl}
    \textit{PCom} & ::= & \tau \ | \ 
                          \textit{read}(x,a) \  |\  \textit{write}(x,a) \ | \ 
                        \textit{cas}\_\textit{suc}(x,a,b) \ | \
                         \textit{cas}\_\textit{fail}(x,a,b)\\
    & | & \textit{call}(m,a) \ |\ \textit{return}(m,a)
  \end{array}
\]


\noindent where $a, b \in \mathcal{D}, x \in \mathcal{X}$ 
and $m\in\mathcal{M}$. 
Here $\tau$ represents an internal command.
To use the commands as labels in an LTS we assume that they encode the
expected values that they return (an oracle of sorts). Hence, for instance the
read command $\textit{read}(x, a)$ encodes the value read $a$.
%
In general \textit{cas} (compare-and-set) commands execute a read
and a conditional write (or no write at all) in a single atomic step. In our
case a successful $\textit{cas}$ is represented with the command
$\textit{cas}\_\textit{suc}(x,a,b)$, and it is enabled when the initial value of
$x$ is $a$, upon which the command 
updates it with value $b$, while a failed
$\textit{cas}$ command, represented with the command
$\textit{cas}\_\textit{fail}(x,a,$ $b)$ does not update the state, and can only
happen when the value of $x$ is not $a$.

A library $\mathcal{L}$ is a tuple
\mbox{$\mathcal{L}$ = $(\mathcal{X}_{\mathcal{L}},\mathcal{M}_{\mathcal{L}}, \mathcal{D}_{\mathcal{L}}, Q_\mathcal{L},$ $\rightarrow_\mathcal{L})$}, where $\mathcal{X}_{\mathcal{L}}$, $\mathcal{M}_{\mathcal{L}}$ and $\mathcal{D}_{\mathcal{L}}$ are a finite memory location set, a finite method name set and a finite data domain of $\mathcal{L}$, respectively.
$Q_\mathcal{L} = \bigcup_{m \in \mathcal{M_\mathcal{L}}} Q_m $ is the union of disjoint finite sets $Q_m$ of program positions of each method $m\in\mathcal{M}_\mathcal{L}$.
Each program position represents the current program counter value and local register value of a
process and can be considered as a state.
$\rightarrow_\mathcal{L} = \bigcup_{m \in \mathcal{M}_\mathcal{L}} \rightarrow_m$ is the union of disjoint transition relations of each method $m\in\mathcal{M}_\mathcal{L}$. Let $\textit{PCom}_{\mathcal{L}}$ be the set of primitive commands (except call and return commands) upon $\mathcal{X}_{\mathcal{L}}$, $\mathcal{M}_{\mathcal{L}}$ and $\mathcal{D}_{\mathcal{L}}$. Then, for each $m \in \mathcal{M}_{\mathcal{L}}$, $\rightarrow_m \subseteq Q_m \times \textit{PCom}_{\mathcal{L}} \times Q_m$. For each $m \in \mathcal{M}_{\mathcal{L}}$ and $a\in\mathcal{D}_\mathcal{L}$, $Q$ contains an initial 
program position $\textit{is}_{(\textit{m,a})}$, which represents that library begins to execute method $m$ with argument $a$, and a final 
program position $\textit{fs}_{(\textit{m,a})}$ which represents that method $m$ has finished its execution and then a return action with return value $a$ can occur. There are neither incoming transitions to $\textit{is}_{(\textit{m,a})}$ nor outgoing transitions from $\textit{fs}_{(\textit{m,a})}$ in $\rightarrow_m$.


The most general client $\mathcal{MGC}$ is defined as a tuple $( \mathcal{M}_{\mathcal{C}}, \mathcal{D}_{\mathcal{C}}, Q_{\mathcal{C}},\rightarrow_{\textit{mgc}})$, where $\mathcal{M}_{\mathcal{C}}$ is a finite method name set, $\mathcal{D}_{\mathcal{C}}$ is a finite data domain, $Q_{\mathcal{C}}=\{\textit{in}_{\textit{clt}},\textit{in}_{\textit{lib}}\}$ 
is the state set, 
and $\rightarrow_{\textit{mgc}}=  \{ (\textit{in}_{\textit{clt}},\textit{call}(m,a),\textit{in}_{\textit{lib}}), (\textit{in}_{\textit{lib}},\textit{return}(m,b), \textit{in}_{\textit{clt}}), \vert m \in \mathcal{M}_{\mathcal{C}}, a,b \in \mathcal{D}_{\mathcal{C}} \} $ 
is a transition relation. State $\textit{in}_{\textit{clt}}$ 
represents that currently 
no method of library is running, and $\textit{in}_{\textit{lib}}$ 
represents that some method of library is running.

\subsection{TSO Operational Semantics}
\label{sec:operational semantics}

A concurrent system consists of $n$ processes, each of which runs the 
most general client $\mathcal{MGC}$ = $( \mathcal{M}, \mathcal{D}, \{\textit{in}_{\textit{clt}},\textit{in}_{\textit{lib}}\},\rightarrow_{\textit{mgc}})$, and all the most general clients interact with a same library $\mathcal{L}$ = $(\mathcal{X}_{\mathcal{L}},\mathcal{M}, \mathcal{D}, Q_\mathcal{L},\rightarrow_\mathcal{L})$.
In this paper we follow the 
TSO memory model of~\cite{DBLP:conf/popl/AtigBBM10}
(similarly
to~\cite{DBLP:conf/tphol/OwensSS09,DBLP:conf/esop/BouajjaniDM13,DBLP:conf/esop/BurckhardtGMY12}),
where each processor is equipped with a FIFO store buffer.
As explained in the introduction, in this 
TSO memory model, each process is associated with an unbounded FIFO store buffer.
Fence commands are used to ensure order between commands before fence and commands after fence. The TSO memory model of \cite{DBLP:conf/popl/AtigBBM10} does not include fence commands, since fence commands can be simulated with $\textit{cas}$ commands.

The operational semantics of a concurrent system (with library $\mathcal{L}$ and $n$ processes) on TSO is defined
as an LTS $\llbracket \mathcal{L}, n \rrbracket$ = $(\textit{Conf}, \Sigma,
\rightarrow,\textit{InitConf})$, 
with $\textit{Conf}$, $\Sigma$, $\rightarrow$
and $\textit{InitConf}$ described below. 

Configuration of $\textit{Conf}$ are tuples $(p,d,u)$, where $p: \{ 1, \ldots, n \} \rightarrow \{\textit{in}_{\textit{clt}}\} \cup (Q_{\mathcal{L}} \times \{\textit{in}_{\textit{lib}}\})$ 
represents the control state of each process, $d: \mathcal{X}_{\mathcal{L}} \rightarrow \mathcal{D}$ is the valuation of memory locations, and $u: \{ 1, \ldots, n\} \rightarrow (\{ (x,a)\ \vert\ x \in \mathcal{X}_{\mathcal{L}}, a \in \mathcal{D} \})^*$ is the content of each process's 
store buffer. The initial configuration $\textit{InitConf} \in \textit{Conf}$ is $(p_{\textit{init}}, d_{\textit{init}}, u_{\textit{init}})$. Here $p_{\textit{init}}$ maps each process id to $\textit{in}_{\textit{clt}}$, 
$d_{\textit{init}}$ is a valuation for memory locations in $\mathcal{X}_{\mathcal{L}}$, and $u_{\textit{init}}$ initializes each process with an empty buffer.


We denote with $\Sigma$ the set of actions defined by the following grammar:\\[-5pt]
\[
  \begin{array}{lcl}
    \Sigma & ::= & \tau(i) \  | \ 
    \textit{read}(i,x,a) \  |\  \textit{write}(i,x,a)\  | \
    \textit{cas}(i,x,a,b) \ | \
    \textit{flush}(i,x,a) \\
    & | & \textit{call}(i,m,a) \ | \  \textit{return}(i,m,a)
  \end{array}
\]

\noindent where $1 \! \leq \! i \! \leq \! n, m \in \mathcal{M}$, $x \in
\mathcal{X}_{\mathcal{L}}$ and $a,b \in \mathcal{D}$. The transition relation $\rightarrow$ is the least relation satisfying the
transition rules shown in \figurename~\ref{fig:transition relation te} for each
$1 \leq i \leq n$. The rules are explained below:
\begin{itemize}
\item[-] $\textit{Tau}$ rule: A $\tau$ transition only influences the control state of one process.


\item[-] $\textit{Read}$ rule: A function $\textit{lookup}(u,d,i,x)$ is used to search for the latest value of $x$ 
    in the buffer or the main memory, i.e.,
\begin{displaymath}
\textit{lookup(u,d,i,x)} = \left \{ \begin{array}{ll}
                      a & \textrm{if } u(i) 
                          \uparrow_{ \{ (x,b)\vert b \in \mathcal{D} \} }  =(x,a) \cdot l, \ \textit{for some sequence} \ l 
                          \\
                      d(x) & \textrm{otherwise }
                      \end{array} \right.
\end{displaymath}
where 
$\{ (x,b)\vert b \in \mathcal{D} \}$ is the set of items of $x$ in buffer. 
A $\textit{read}(i,x,a)$ action returns the latest value of $x$ in the buffer
if present, or returns the value in memory if the buffer contains no stores on
$x$.

\item[-] $\textit{Write}$ rule: 
    An $\textit{write}(i,x,a)$ action puts an item $(x,a)$ into
    the tail of its 
    store buffer.

\item[-] $\textit{Cas}\_\textit{Suc}$ and $\textit{Cas}\_\textit{Fail}$ rules: 
  A $\textit{cas}$ action atomically executes a read and a conditional write (or no write at all) if and only if the process's 
  store buffer is empty.

\item[-] $\textit{Flush}$ rule: 
    An $\textit{flush}$ action is carried out by the memory model to flush the item at the head of the process's 
    store buffer to memory at any time.

\item[-] $\textit{Call}$ and $\textit{Return}$ rules: After a $\textit{call}$ action, the current
  process 
  transits to $\textit{is}_{(\textit{m,a})}$. When the current process 
  transits to $\textit{fs}_{(\textit{m,a})}$ it can launch a $\textit{return}$ action and 
  move to 
  $\textit{in}_{\textit{clt}}$ of the most general client. 
\end{itemize}
\begin{figure}[tbp]
\[
\begin{array}{l c}
  \bigfrac{ p(i)=(q_i,\textit{in}_{\textit{lib}} 
  ) \quad q_i\ {\xrightarrow{\tau}}_{\mathcal{L}}\ q'_i } { ( p,d,u)\ {\xrightarrow{\tau(i)}}\ ( p[i:(q'_i,\textit{in}_{\textit{lib}} 
)],d,u )} {\textit{Tau}}
\end{array}
\]

\vspace{-2pt}






\vspace{-2pt}

\[
\begin{array}{l c}
  \bigfrac{ p(i)=(q_i,\textit{in}_{\textit{lib}} 
) \quad q_i\ {\xrightarrow{\textit{read}(x,a)}}_{\mathcal{L}}\ q'_i \quad \textit{lookup}(u,d,i,x)=a } { ( p,d,u)\ {\xrightarrow{\textit{read}(i,x,a)}}\
  ( p[i:(q'_i, \textit{in}_{\textit{lib}} 
)],d,u ] )} {\textit{Read}}
\end{array}
\]

\vspace{-2pt}

\[
\begin{array}{l c}
  \bigfrac{ p(i)=(q_i,\textit{in}_{\textit{lib}} 
) \quad q_i\
  {\xrightarrow{\textit{write}(x,a)}}_{\mathcal{L}}\ q'_i \quad u(i)=l } { ( p,d,u)\ {\xrightarrow{\textit{write}(i,x,a)}}\
  ( p[i:(q'_i, \textit{in}_{\textit{lib}} 
)],d,u[i:(x,a) \cdot l] )} {\textit{Write}}
\end{array}
\]

\vspace{-2pt}

\[
\begin{array}{l c}
  \bigfrac{ p(i)=(q_i, \textit{in}_{\textit{lib}} 
) \quad q_i\
  {\xrightarrow{\textit{cas}\_\textit{suc}(x,a,b)}}_{\mathcal{L}}\ q'_i \quad
  d(x)=a \quad u(i)=\epsilon } { ( p,d,u)\
{\xrightarrow{\textit{cas}(i,x,a,b)}}\
  ( p[i:(q'_i, \textit{in}_{\textit{lib}} 
)],d[x:b],u)} {\textit{Cas}\_\textit{Suc}}

\end{array}
\]

\vspace{-2pt}

\[
\begin{array}{l c}
  \bigfrac{ p(i)=(q_i, \textit{in}_{\textit{lib}} 
) \quad q_i\
  {\xrightarrow{\textit{cas}\_\textit{fail}(x,a,b)}}_{\mathcal{L}}\ q'_i \quad
  d(x) \neq a \quad u(i)=\epsilon } { (p,d,u)\
  {\xrightarrow{\textit{cas}(i,x,a,b)}}\ ( p[i:(q'_i, \textit{in}_{\textit{lib}} 
  )],d,u)} {\textit{Cas}\_\textit{Fail}}
\end{array}
\]

\vspace{-2pt}

\[
\begin{array}{l c}
\bigfrac{ u(i)=l \cdot (x,a) } { ( p,d,u)\
{\xrightarrow{\textit{flush}(i,x,a)}}\
( p,d[x:a],u[i:l] )} {\textit{Flush}}
\end{array}
\]

\vspace{-2pt}

\[
\begin{array}{l c}
  \bigfrac{ p(i)= \textit{in}_{\textit{clt}} 
} { (p,d,u)\
{\xrightarrow{\textit{call}(i,m,a)}}\
  ( p[ i: (\textit{is}_{(\textit{m,a})}, \textit{in}_{\textit{lib}} 
) ],d,u)} {\textit{Call}}
\end{array}
\]

\vspace{-2pt}

\[
\begin{array}{l c}
  \bigfrac{ p(i)=(\textit{fs}_{(\textit{m,a})}, \textit{in}_{\textit{lib}} 
) } { (p,d,u)\
{\xrightarrow{\textit{return}(i,m,a)}}\
  ( p[i: \textit{in}_{\textit{clt}} 
],d,u )} {\textit{Return}}
\end{array}
\]

\caption{Transition Relation $\rightarrow$}\label{fig:transition
relation te}

\vspace{-15pt}

\end{figure}

\forget{The detailed
definition of the transition relation can be found in Appendix
\ref{sec:appendix proof of section sec:concurrent systems}. Intuitively, an
$\textit{write}(i,x,a)$ action puts an item $(x,a)$ into the store buffer, and
a $\textit{read}(i,x,a)$ action returns the latest value of $x$ in the buffer
if present, or returns the value in memory if the buffer contains no stores on
$x$. Actions $\tau$, $\textit{call}$ and $\textit{return}$ only influence the
control state. An $\textit{flush}$ action is carried out by the memory model
to flush the item at the head of the process's 
store buffer to memory at
any time. A $\textit{cas}$ action atomically executes a read and a
conditional write (or no write at all) if and only if the process's 
store buffer is empty.
}

\forget{
The transition relation $\rightarrow$ is the least relation satisfying the
transition rules shown in \figurename~\ref{fig:transition relation te} for each
$1 \leq i \leq n$. The rules are explained below:
\begin{itemize}
\item[-] $\textit{Tau}$ rule: A $\tau$ transition only influences the control state of one process.


\item[-] $\textit{Read}$ rule: A function $\textit{lookup}(u,d,i,x)$ is used to search for the latest value of $x$ from its  processor-local store buffer or the main memory, i.e.,
\begin{displaymath}
\textit{lookup(u,d,i,x)} = \left \{ \begin{array}{ll}
                      a & \textrm{if } u(i) 
                      \redt{ \uparrow_{ \{ (x,b)\vert b \in \mathcal{D} \} } } =(x,a) \cdot l, \ \textit{for some} \ l \in \Sigma_x^*  \\
                      d(x) & \textrm{otherwise }
                      \end{array} \right.
\end{displaymath}
where 
\redt{$\{ (x,b)\vert b \in \mathcal{D} \}$} is the set of \redt{items of $x$ in buffer.} 
Read actions return
the latest value of $x$ from the processor-local store buffer if present, and
return the value in memory if there isn't any.

\item[-] $\textit{Write}$ rule: A write action will insert a pair of memory location and value at the 
    tail of its processor-local store buffer.

\item[-] $\textit{Cas}\_\textit{Suc}$ and $\textit{Cas}\_\textit{Fail}$ rules: A $\textit{cas}$
  command can only be executed when the  processor-local store buffer is empty
  and thus forces the current process to clear its store buffer in advance. 

\item[-] $\textit{Flush}$ rule: The memory model may decide to flush the entry at the head of processor-local store buffer to memory at any time.

\item[-] $\textit{Call}$ and $\textit{Return}$ rules: After a $\textit{call}$ action, the current
  process 
  transits to $\textit{is}_{(\textit{m,a})}$. When the current process 
  transits to $\textit{fs}_{(\textit{m,a})}$ it can launch a $\textit{return}$ action and 
  move to 
  \redt{$\textit{in}_{\textit{clt}}$} of the most general client. 
\end{itemize}
\begin{figure}[tbp]
\[
\begin{array}{l c}
\bigfrac{ p(i)=(q_i,\redt{\textit{in}_{\textit{lib}}} 
) \quad q_i\ {\xrightarrow{\tau}}_{\mathcal{L}}\ q'_i } { ( p,d,u)\ {\xrightarrow{\tau(i)}}\ ( p[i:(q'_i,\redt{\textit{in}_{\textit{lib}}} 
)],d,u )} {\textit{Tau}}
\end{array}
\]

\vspace{-2pt}






\vspace{-2pt}

\[
\begin{array}{l c}
\bigfrac{ p(i)=(q_i,\redt{\textit{in}_{\textit{lib}}} 
) \quad q_i\ {\xrightarrow{\textit{read}(x,a)}}_{\mathcal{L}}\ q'_i \quad \textit{lookup}(u,d,i,x)=a } { ( p,d,u)\ {\xrightarrow{\textit{read}(i,x,a)}}\
( p[i:(q'_i, \redt{\textit{in}_{\textit{lib}}} 
)],d,u ] )} {\textit{Read}}
\end{array}
\]

\vspace{-2pt}

\[
\begin{array}{l c}
\bigfrac{ p(i)=(q_i,\redt{\textit{in}_{\textit{lib}}} 
) \quad q_i\
  {\xrightarrow{\textit{write}(x,a)}}_{\mathcal{L}}\ q'_i \quad u(i)=l } { ( p,d,u)\ {\xrightarrow{\textit{write}(i,x,a)}}\
( p[i:(q'_i, \redt{\textit{in}_{\textit{lib}}} 
)],d,u[i:(x,a) \cdot l] )} {\textit{Write}}
\end{array}
\]

\vspace{-2pt}

\[
\begin{array}{l c}
\bigfrac{ p(i)=(q_i, \redt{\textit{in}_{\textit{lib}}} 
) \quad q_i\
  {\xrightarrow{\textit{cas}\_\textit{suc}(x,a,b)}}_{\mathcal{L}}\ q'_i \quad
  d(x)=a \quad u(i)=\epsilon } { ( p,d,u)\
{\xrightarrow{\textit{cas}(i,x,a,b)}}\
( p[i:(q'_i, \redt{\textit{in}_{\textit{lib}}} 
)],d[x:b],u)} {\textit{Cas}\_\textit{Suc}}

\end{array}
\]

\vspace{-2pt}

\[
\begin{array}{l c}
\bigfrac{ p(i)=(q_i, \redt{\textit{in}_{\textit{lib}}} 
) \quad q_i\
  {\xrightarrow{\textit{cas}\_\textit{fail}(x,a,b)}}_{\mathcal{L}}\ q'_i \quad
  d(x) \neq a \quad u(i)=\epsilon } { (p,d,u)\
  {\xrightarrow{\textit{cas}(i,x,a,b)}}\ ( p[i:(q'_i, \redt{\textit{in}_{\textit{lib}}} 
  )],d,u)} {\textit{Cas}\_\textit{Fail}}
\end{array}
\]

\vspace{-2pt}

\[
\begin{array}{l c}
\bigfrac{ u(i)=l \cdot (x,a) } { ( p,d,u)\
{\xrightarrow{\textit{flush}(i,x,a)}}\
( p,d[x:a],u[i:l] )} {\textit{Flush}}
\end{array}
\]

\vspace{-2pt}

\[
\begin{array}{l c}
\bigfrac{ p(i)= \redt{\textit{in}_{\textit{clt}}} 
} { (p,d,u)\
{\xrightarrow{\textit{call}(i,m,a)}}\
( p[ i: (\textit{is}_{(\textit{m,a})}, \redt{\textit{in}_{\textit{lib}}} 
) ],d,u)} {\textit{Call}}
\end{array}
\]

\vspace{-2pt}

\[
\begin{array}{l c}
\bigfrac{ p(i)=(\textit{fs}_{(\textit{m,a})}, \redt{\textit{in}_{\textit{lib}}} 
) } { (p,d,u)\
{\xrightarrow{\textit{return}(i,m,a)}}\
( p[i: \redt{\textit{in}_{\textit{clt}}} 
],d,u )} {\textit{Return}}
\end{array}
\]

\vspace{-5pt}
\caption{Transition Relation $\rightarrow$}\label{fig:transition
relation te}

\vspace{-15pt}

\end{figure}


The initial configuration $\textit{InitConf} \in \textit{Conf}$ is a tuple $(p_{\textit{init}}, d_{\textit{init}}, u_{\textit{init}})$. Here $p_{\textit{init}}$ maps each process id to \redt{$\textit{in}_{\textit{clt}}$}. 
$d_{\textit{init}}$ is a valuation for memory locations in $\mathcal{X}_{\mathcal{L}}$, and $u_{\textit{init}}$ initializes each process with an empty buffer.
}


\section{Liveness}
\label{sec:liveness}

\forget{
Below we give an intuitive descriptions of the properties listed above:
\begin{itemize}
\item Lock-freedom requires that some library method call returns whenever the concurrent system
  executes for a sufficient number of steps.
\item Wait-freedom requires that each library method call returns whenever the thread executing it
  is scheduled for a sufficient number of steps.
\item Deadlock-freedom requires that every fair execution satisfies lock-freedom.
\item Starvation-freedom requires that every fair execution satisfies wait-freedom.
\item Obstruction-freedom requires that each library method call returns in a finite number of steps
  when executed in isolation.
\end{itemize}
}


We use $T_{\omega}(\llbracket \mathcal{L}, n \rrbracket)$ to denote all 
infinite traces of the concurrent system $\llbracket \mathcal{L}, n \rrbracket$.
Given an execution $t \in T_{\omega}(\llbracket \mathcal{L}, n \rrbracket)$, 
we say a call action $t(i)$ matches a return action $t(j)$ with $i<j$, if the two actions are by the
same process, and there are no call or return actions by the same process in-between.
Here we assume that 
methods do not call other methods. 
Let $\textit{pend}\_\textit{inv}(t)$ denote the set of pending call actions of
$t$, in other words, call actions of $t$ with no matching return action in $t$.

We define the following predicates borrowed from
\cite{DBLP:conf/concur/LiangHFS13}. Since we do not consider aborts, and we do
not consider termination markers, 
we slightly modify the predicates definition of
\cite{DBLP:conf/concur/LiangHFS13} by consider only infinite executions. Given
an infinite execution $t \in T_{\omega}(\llbracket \mathcal{L},n \rrbracket)$:
\begin{itemize}
\item[-] $\textit{prog-t}(t)$: This predicate holds when every method call in $t$ eventually
  returns. Formally, for each index $i$ and action $e$, if $e \in
  \textit{pend}\_\textit{inv}(t(1,i))$, 
  there exists $j>i$, such that $t(j)$ matches $e$. 

\item[-] $\textit{prog-s}(t)$: This predicate holds when 
    there is always some method return action happens in the future if the system executes for a sufficient number of steps. Formally, for each index $i$ and action
  $e$, if $e \in \textit{pend}\_\textit{inv}(t(1,i))$ holds, then there exists $j>i$, such that
  $t(j)$ is a return action.

\item[-] $\textit{sched}(t)$: This predicate holds if 
  $t$ is an infinite trace with pending call actions, 
  and at least one of the processes with a pending call action 
  is scheduled infinitely many
  times. Formally, if $\vert t \vert = \omega$ and $\textit{pend}\_\textit{inv}(t) \neq \emptyset$,
  then there exists $e \in \textit{pend}\_\textit{inv}(t)$, such that $\vert t \uparrow_{pid(e)}
  \vert = \omega$. Here $t \uparrow_{pid(e)}$ represents the projection of $t$ into the actions of the process of $e$.

\item[-] $\textit{fair}(t)$: This predicate describing fair interleavings requires that if $t$ is an
  infinite execution, then each 
  process is scheduled infinitely many times. Formally,
  if $\vert t \vert = \omega$, then for each process $proc$ 
  in $t$,
  \mbox{$\vert t \uparrow_{proc} \vert$} \mbox{$= \omega$}. Here $t \uparrow_{proc}$ represents the
  projection of $t$ into actions of process $proc$.

\item[-] $\textit{iso}(t)$: This predicate requires that if $t$ is an infinite execution, eventually
  only one process is scheduled. Formally, if $\vert t \vert = \omega$, then there exists index $i$
  and process $proc$, such that for each $j>i$, $t(j)$ is an action of process $proc$.
\end{itemize}

With these predicates, we can present the formal notions of lock-freedom, wait-freedom, deadlock-freedom, starvation-freedom and obstruction-freedom of \cite{DBLP:conf/concur/LiangHFS13}. 

\begin{definition}\label{def:lock-free, wait-free, deadlock-free, starvation-free and obstruction-free}
Given an execution $t \in T_{\omega}(\llbracket \mathcal{L},n \rrbracket)$:
\begin{itemize}
\item[-] $t$ satisfies lock-freedom whenever: $\textit{sched}(t) \Rightarrow \textit{prog-s}(t)$,

\item[-] $t$ satisfies wait-freedom whenever: $\textit{sched}(t) \Rightarrow \textit{prog-t}(t)$,

\item[-] $t$ satisfies deadlock-freedom whenever: $\textit{fair}(t) \Rightarrow \textit{prog-s}(t)$,

\item[-] $t$ satisfies starvation-freedom whenever: $\textit{fair}(t) \Rightarrow \textit{prog-t}(t)$,

\item[-] $t$ satisfies obstruction-freedom whenever: $\textit{sched}(t) \wedge \textit{iso}(t) \Rightarrow \textit{prog-s}(t)$
\end{itemize}

For library $\mathcal{L}$, we parameterize the definitions above over $n$ processes, and we define
their satisfaction requiring that each execution of $\llbracket \mathcal{L}, n \rrbracket$ satisfies
the corresponding liveness property.
\end{definition}

\forget{
The definition of bounded wait-freedom is as follows:

\begin{definition}\label{def:bounded wait-freedom}
Given an execution $t \in T(\llbracket \mathcal{L},n \rrbracket)$ and a bound $k$, $t$ satisfies bounded wait-freedom, if $t$ satisfies wait-freedom, and for each call action $t(i)$ of some process $proc$, there exists indexes $j \leq  k  +  1$ and $i'$, such that the index of $t(i)$ in $t \uparrow_{proc}$ is $i'$, and $(t \uparrow_{proc})(i'+j)$ is a matching return action (if $\vert t  \uparrow_{proc}  \vert \geq i'  +  k  +  1$).
Given a bound $k$, $\mathcal{L}$ satisfies bounded wait-freedom for $n$ processes, if each execution of $\llbracket \mathcal{L}, n \rrbracket$ satisfies bounded wait-freedom.
\end{definition}
}

\forget{
Petrank \emph{et al.} \cite{DBLP:conf/pldi/PetrankMS09} demonstrate how to formalize lock-freedom, wait-freedom and obstruction-freedom as LTL formulas.
In Appendix \ref{sec:appendix discussion of section sec:liveness} we state how to formalize deadlock-freedom and starvation-freedom for $n$ processes as CTL$^*$ formulas.
As explained in Section \ref{subsec:equivlant characterization of obstruction-freedom}, 
concurrent system with $n$ processes on SC can be expressed as finite state LTS, and LTL and CTL$^*$ model checking is decidable 
\cite{DBLP:reference/mc/2018}, we thus obtain that the above five liveness properties are decidable for SC.
}

Petrank \emph{et al.} \cite{DBLP:conf/pldi/PetrankMS09} demonstrate how to formalize lock-freedom, wait-freedom and obstruction-freedom as LTL formulas.
It remains to show that deadlock-freedom and starvation-freedom for $n$ processes can be formalized
as 
CTL$^*$ formulas.
Let $A$, $F$ and $G$ be the standard modalities in CTL$^*$.
Let $P_{\textit{ret}}$ be a predicate identifying return actions, 
$P_{proc}$ be a
predicate identifying actions of process $proc$, and $P_{(r,proc)}$ 
the predicate identifying process $proc$'s
return actions. We define $\textit{fair}=(GF\ P_1) \wedge \ldots \wedge (GF\ P_n)$ to describe fair
executions of $n$ processes. Then, deadlock-freedom can be defined as the 
CTL$^*$ formula $A (\textit{fair}
\rightarrow GF\ P_{\textit{ret}})$, and starvation-freedom can be defined as the 
CTL$^*$ formula $A (\textit{fair} \rightarrow GF\ P_{(r,1)} \wedge \ldots \wedge GF\ P_{(r,n)})$.

As explained in Section \ref{subsec:equivlant characterization of obstruction-freedom}, 
concurrent system with $n$ processes on SC can be expressed as finite state LTS, and LTL and CTL$^*$ model checking is decidable 
\cite{DBLP:reference/mc/2018}, we can obtain that the above five liveness properties are decidable for SC.




\section{Undecidability of Four Liveness Properties} 
\label{sec:undecidability of lock-freedom and wait-freedom}

In this section we propose our undecidability proof of lock-freedom, wait-freedom, deadlock-freedom and starvation-freedom on TSO.

\forget{
In this section we propose our undecidability proof of lock-freedom,
wait-freedom, deadlock-freedom and starvation-freedom 
on TSO.
We first introduce the notion of lossy channel machines, the cyclic post
correspondence problem (CPCP) \cite{DBLP:journals/acta/Ruohonen83}, and a known
result of Abdulla \emph{et al.} \cite{DBLP:journals/iandc/AbdullaJ96a} which
reduces checking CPCP into checking if a specific lossy channel machine has
infinite executions of a particular form.
Then, we generate a specific library for this lossy channel machine, and further
reduce checking 
the existence of such executions of the lossy channel machines into checking
lock-freedom, deadlock-freedom, wait-freedom and starvation-freedom for this
library.
}

\subsection{Perfect/Lossy Channel 
Machines}
\label{subsec:perfect/lossy channel system}

A channel machine
\cite{DBLP:journals/iandc/AbdullaJ96a,DBLP:conf/popl/AtigBBM10} is a finite
control machine equipped with channels of unbounded size. It can perform send
and receive operations on its channels. A lossy channel machine is a channel
machine where arbitrarily many items in its channels may be lost
non-deterministically at any time and without any notification.

Let $\mathcal{CH}$ be the finite set of channel names and $\Sigma_{\mathcal{CH}}$ be a finite alphabet of channel contents.
The content of a channel is a finite sequence over $\Sigma_{\mathcal{CH}}$. A
channel operation is either a send operation $c!a$ sending the value $a$ over
channel $c$, a receive operation $c?a$ receiving $a$ over $c$, or a silent operation $nop$.
We associate with each channel operation a relation over words as follows:
Given $u \in \Sigma_{\mathcal{CH}}^*$, we have $\llbracket nop \rrbracket(u,u)$, $\llbracket c!a \rrbracket(u,a \cdot u)$ and $\llbracket c?a \rrbracket(u \cdot a,u)$.
\forget{
Let $\textit{Op}(\mathcal{CH})$ be the set of channel operations over $\mathcal{CH}$.
Given $u,u' \in \mathcal{CH} \rightarrow \Sigma_{\mathcal{CH}}^*$ two functions
that record the contents of each channel, we define the channel operation
relation, relating a channel before and after, as follows:\\[-5pt]
\[
  \llbracket nop \rrbracket(u,u)\qquad
  \llbracket c!a \rrbracket(u,u[c: a \cdot u(c)]) \qquad
  \llbracket c?a \rrbracket(u'[c: u'(c) \cdot a],u')
\]
}
A channel operation over a finite channel name set $\mathcal{CH}$ is a
mapping that associates, with each channel of $\mathcal{CH}$, a channel
operation. Let $\textit{Op}(\mathcal{CH})$ be the set of channel operations
over $\mathcal{CH}$. The relation of channel operations is extended to channel
operations over $\mathcal{CH}$ as follows: given a channel operation
$\textit{op}$ over $\mathcal{CH}$ and two functions $u,u' \in \mathcal{CH}
\rightarrow \Sigma_{\mathcal{CH}}^*$, we have $\llbracket \textit{op} \rrbracket(u,u')$, if $\llbracket \textit{op}(c) \rrbracket(u(c),u'(c))$ holds for each $c \in \mathcal{CH}$.

A $\textit{channel machine}$ is formally defined as a tuple $\textit{CM} = (Q,\mathcal{CH},\Sigma_{\mathcal{CH}},\Lambda,\Delta)$, where (1) $Q$ is a finite set of states, (2) $\mathcal{CH}$ is a finite set of channel names, (3) $\Sigma_{\mathcal{CH}}$ is a finite alphabet for channel contents, (4) $\Lambda$ is a finite set of transition labels, and (5) $\Delta \subseteq Q \times (\Lambda\cup\{\epsilon\}) \times \textit{Op}(\mathcal{CH}) \times Q$ is a finite set of transitions.
When $\textit{CM}$ is considered as a perfect channel machine, its semantics is defined as an LTS $(\textit{Conf}, \Lambda,\rightarrow,\textit{initConf})$.
A configuration of $\textit{Conf}$ is a pair $(q,u)$ where $q \in Q$ and $u:\mathcal{CH} \rightarrow \Sigma_{\mathcal{CH}}^*$. $\textit{initConf}$ is the initial configuration and all its channels are empty.
The transition relation $\rightarrow$ is defined as follows: given $q,q' \in Q$ and $u,u' \in \mathcal{CH} \rightarrow \Sigma_{\mathcal{CH}}^*$,
$(q,u) \overset{\alpha}{\longrightarrow} (q',u')$,
if there exists $op$, such that $(q,\alpha,\textit{op},q') \in \Delta$ and $\llbracket \textit{op} \rrbracket (u,u')$.
When $\textit{CM}$ is considered as a lossy channel machine, its semantics is defined as another LTS $(\textit{Conf}, \Lambda,\rightarrow',\textit{initConf})$, with transition relation $\rightarrow'$ defined as follows:
$(q,u) \overset{\alpha}{\longrightarrow}' (q',u')$,
if there exists $v,v' \in \mathcal{CH} \rightarrow \Sigma_{\mathcal{CH}}^*$,
such that (1) for each $c \in \mathcal{CH}$, $v(c)$ is a sub-word of $u(c)$, (2) $(q,v) \overset{\alpha}{\longrightarrow} (q',v')$ and (3) for each $c \in \mathcal{CH}$, $u'(c)$ is a sub-word of $v'(c)$.
Here a sequence $l_1 = a_1 \cdot \ldots \cdot a_u$ is a sub-word of another sequence $l_2=b_1 \cdot \ldots \cdot b_v$, if there exists $i_1 < \ldots < i_u$, such that $a_j=b_{i_j}$ for each $1 \leq j \leq u$.

\subsection{The Lossy Channel Machine for CPCP of Abdulla et al. \cite{DBLP:journals/iandc/AbdullaJ96a}}
\label{subsec:the lossy channel machine of CPCP in Abbdulla}

Given two sequences $l$ and $l'$, let $l =_c l'$ denote that there exists
sequences $l_1$ and $l_2$, such that $l=l_1 \cdot l_2$ and $l'=l_2 \cdot l_1$.
Given two finite sequences $\alpha_1,\ldots,\alpha_m$ and $\beta_1,\ldots,\beta_m$, where each
$\alpha_i$ and $\beta_i$ is a finite sequence over a finite alphabet, a solution of $\alpha_1,\ldots,\alpha_m$ and $\beta_1,\ldots,\beta_m$ is a nonempty sequence of
indices $i_1 \cdot \ldots \cdot i_k$, such that $\alpha_{i_1} \cdot \ldots \cdot \alpha_{i_k}$ $=_c$ $\beta_{i_1} \cdot \ldots \cdot \beta_{i_k}$.
The cyclic post correspondence problem (CPCP)
\cite{DBLP:journals/acta/Ruohonen83}, known to be undecidable, requires to answer given
$\alpha_1,\ldots,\alpha_m$ and $\beta_1,\ldots,\beta_m$, whether there exists one such solution.

Given two finite sequences $A = \alpha_1,\ldots,\alpha_m$ and $B = \beta_1,\ldots,\beta_m$ of finite sequences,
Abdulla \emph{et al.} \cite{DBLP:journals/iandc/AbdullaJ96a} generate the lossy channel machine $\textit{CM}_{(A,B)}$ 
shown in \figurename \ref{fig:M_{(A,B)}, original}.
Moreover, they prove that CPCP has a solution for $A$ and $B$, if and only if
$\textit{CM}_{(A,B)}$ has an infinite execution that visits state $s_1$ infinite times.
We point the readers to \cite{DBLP:journals/iandc/AbdullaJ96a} for
  an explanation on how 
  $\textit{CM}_{(A,B)}$ solves CPCP.

\begin{figure}[tbp]
  \centering
  \includegraphics[width=0.5\textwidth]{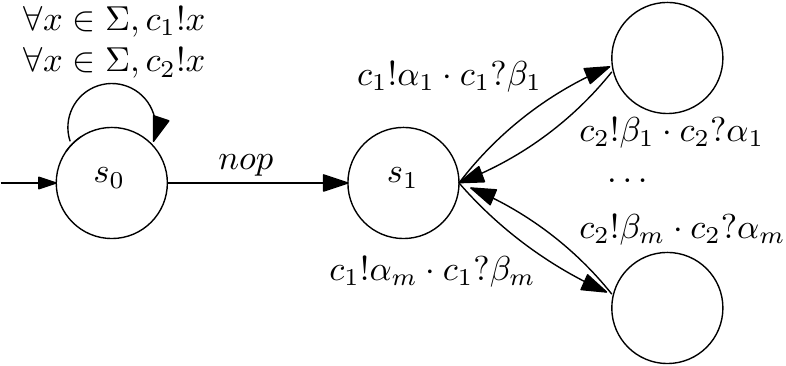}
  \caption{The lossy channel machine $\textit{CM}_{(A,B)}$.}
  \label{fig:M_{(A,B)}, original}
\end{figure}

$\textit{CM}_{(A,B)}$ contains two channels $c_1$ and $c_2$.
We use $c_1 ! \alpha_1$ to represent inserting the contents of $\alpha_1$ into $c_1$ one by one, and use $c_1 ? \beta_1$ to represent receiving the content of $\beta_1$ from $c_1$ one by one.
We use $c_1 ! \alpha_1 \cdot c_1 ? \beta_1$ to represent first do $c_1 ! \alpha_1$ and then do $c_1 ? \beta_1$.
Each execution of $\textit{CM}_{(A,B)}$ can be divided into (at most) two phases.
The first phase, called the guess phase, is a self-loop of state $s_0$, and is used to guess a solution of CPCP.
The second phase, called the check phase, goes from $s_0$ to $s_1$ and then repeatedly ``checks the content of $c_1$ and $c_2$''.

Based on $\textit{CM}_{(A,B)}$ we generate the lossy channel machine $\textit{CM}'_{(A,B)}$ which uses only one channel $c$ and works in a similar way.
To simulate one transition of $\textit{CM}_{(A,B)}$, $\textit{CM}'_{(A,B)}$ stores the content of $c_1$ followed by the content of $c_2$ (as well as new delimiter symbols) in its channel. Then it scans each symbol in its channel, modifies it (if necessary) and puts it back into its buffer, until the contents of $c_1$ and $c_2$ have all been dealt with.

\forget{
Based on $\textit{CM}_{(A,B)}$ we generate a lossy channel machine $\textit{CM}'_{(A,B)}$ which uses only one channel $c$ and works in a similar way as $\textit{CM}_{(A,B)}$. We introduce additional symbols $\bot_1$ and $\bot_2$. One transition of $\textit{CM}_{(A,B)}$ from a configuration $(q,u)$ corresponds to a sequence of transitions in $\textit{CM}'_{(A,B)}$ from configuration $(q,u')$, where $u'$ maps $c$ into $\bot_2 \cdot u(c_2) \cdot \bot_2 \cdot \bot_1 \cdot u(c_1) \cdot \bot_1$. In detail:
\begin{itemize}
\item[-] $c_1 ? a$ of $\textit{CM}_{(A,B)}$ corresponds to (1) do $c ? \bot_1 \cdot c ! \bot_1 \cdot c ? a$, then repeatedly receive non-$\bot_1$ data from channel $c$ and put it back into channel $c$, until receiving a $\bot_1$ from channel $c$ and put it back into channel $c$, and (2) do $c ? \bot_2 \cdot c ! \bot_2$, then repeatedly receive a non-$\bot_2$ data from channel $c$ and put it back into channel $c$, until receiving a $\bot_2$ and put it back into channel $c$.

\item[-] $c_1 ! a$ of $\textit{CM}_{(A,B)}$ corresponds to (1) do $c ? \bot_1 \cdot c ! \bot_1$, then repeatedly receive non-$\bot_1$ data from channel $c$ and put it back into channel $c$, until receiving a $\bot_1$ from channel $c$, and then do $c ! a \cdot c ! \bot_1$, the remaining part is the same as step (2) of $c_1 ? a$.

\item[-] The case of $c_2 ? a$ and $c_2 ! a$ is similar and we omit its detailed description here.
\end{itemize}
}

We could depict $\textit{CM}'_{(A,B)}$ similarly to \figurename~\ref{fig:M_{(A,B)},
  original}, and each transition of $\textit{CM}'_{(A,B)}$ is now a ``extended version transition''
as we discussed above. Therefore, there are ``$\textit{CM}'_{(A,B)}$'s versions'' of $s_0$, $s_1$ and
$s_{\textit{trap}}$, and when no confusion is possible we also call them $s_0$, $s_1$ and
$s_{\textit{trap}}$, respectively.
Note that if some new delimiter symbols are lost during transition, then such paths can not complete the simulation of one transition of $\textit{CM}_{(A,B)}$, and thus, do not influence the proof of the following lemma.
\forget{Note that items, including $\bot_1$ and
$\bot_2$, may be lost during transitions of $\textit{CM}'_{(A,B)}$. However,
such paths of $\textit{CM}'_{(A,B)}$ can not reach $s_1$, and thus they do not influence the proof of the following lemma.}
Based on above discussion, we reduce CPCP of $A$ and $B$ into an infinite execution problem of the lossy channel machine $\textit{CM}'_{(A,B)}$, as stated by the following lemma.

\begin{lemma}
  \label{lemma:reducing CPCP into a infinite execution problem of lossy channel machine M'{(A,B)}}
  There is a CPCP solution 
  for sequences $A$ and $B$ of finite sequences, if and only if there is an infinite execution of $\textit{CM}'_{(A,B)}$ that visits $s_1$ infinitely often.
\end{lemma}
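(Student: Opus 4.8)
The plan is to factor the proof through the result of Abdulla \emph{et al.} \cite{DBLP:journals/iandc/AbdullaJ96a}, which already establishes that CPCP has a solution for $A$ and $B$ if and only if $\textit{CM}_{(A,B)}$ has an infinite execution visiting $s_1$ infinitely often. Hence it suffices to prove that $\textit{CM}_{(A,B)}$ admits such an execution if and only if $\textit{CM}'_{(A,B)}$ does. I would establish this via a simulation-style correspondence built on the encoding implicit in the construction of $\textit{CM}'_{(A,B)}$: a configuration $(q,u)$ of the two-channel machine, with $u(c_1)$ and $u(c_2)$ the contents of its two channels, is represented by the single-channel configuration whose channel $c$ holds the contents of $c_1$ and $c_2$ interleaved with the fresh delimiter symbols $\bot_1,\bot_2$ marking the boundaries of the two encoded channels, and each transition of $\textit{CM}_{(A,B)}$ is matched by the finite block of $\textit{CM}'_{(A,B)}$ transitions that scans and rewrites this encoded content. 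The first step is to record the invariant that whenever $\textit{CM}'_{(A,B)}$ is at its copy of $s_0$, $s_1$, or $s_{\textit{trap}}$ having just completed a simulation block, its channel holds a word of this encoded form.

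For the forward direction I would take an infinite execution of $\textit{CM}_{(A,B)}$ visiting $s_1$ infinitely often and replay each of its transitions by the corresponding simulation block of $\textit{CM}'_{(A,B)}$. Since losing is optional in the lossy semantics (arbitrarily many items, possibly none, may be dropped), the perfect replay of each block is itself a legal lossy run, and every item lost by $\textit{CM}_{(A,B)}$ can be reproduced by a matching loss in $\textit{CM}'_{(A,B)}$. The resulting infinite run of $\textit{CM}'_{(A,B)}$ passes through its copy of $s_1$ exactly when the original passes through $s_1$, hence infinitely often.

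The backward direction is where the real work lies, and I expect it to be the main obstacle. Given an infinite run of $\textit{CM}'_{(A,B)}$ visiting $s_1$ infinitely often, I would segment it at its successive visits to $s_1$, and the central lemma to prove is that every segment between two consecutive $s_1$-visits decodes to a single (possibly lossy) transition of $\textit{CM}_{(A,B)}$. Here the delimiters $\bot_1,\bot_2$ are essential: as noted in the remark preceding the statement, if a delimiter is dropped during a block the scan can no longer be completed and the run cannot return to $s_1$, so no inter-visit segment can have lost a delimiter. Consequently each such segment performs a full, well-formed scan whose only losses are losses of data symbols, and these project down exactly to the sub-word losses permitted by the lossy semantics of $\textit{CM}_{(A,B)}$. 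Reading off the $\textit{CM}_{(A,B)}$ transition labelling each segment then yields an infinite execution of $\textit{CM}_{(A,B)}$, and because $s_1$ is visited infinitely often in $\textit{CM}'_{(A,B)}$ with at least one complete block between visits, the extracted execution visits $s_1$ infinitely often as well. The delicate points to verify are that the decoding of a completed block is faithful even in the presence of arbitrary data losses, and that the sub-word relation is preserved in both directions so that lossy steps correspond across the two machines.
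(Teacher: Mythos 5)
Your proposal follows essentially the same route as the paper: the paper states the lemma without a detailed proof, justifying it by Abdulla \emph{et al.}'s reduction of CPCP to the infinite-execution problem for the two-channel machine $\textit{CM}_{(A,B)}$, combined with the block-by-block simulation of $\textit{CM}_{(A,B)}$ by the one-channel machine $\textit{CM}'_{(A,B)}$ and the observation that losing a delimiter symbol prevents a simulation block from completing. Your write-up simply makes explicit the encoding invariant, the segmentation of the backward direction at $s_1$-visits, and the correspondence of sub-word losses, all of which the paper leaves implicit; the approach is the same and correct.
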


\forget{
We can generate a lossy channel machine $\textit{CM}'_{(A,B)}$ from $\textit{CM}_{(A,B)}$, such that $\textit{CM}'_{(A,B)}$ works in a similar way as $\textit{CM}_{(A,B)}$ and has only one channel $c$. $\textit{CM}'_{(A,B)}$ is generated from $\textit{CM}_{(A,B)}$ as follows: We introduce additional symbols $\bot_1$ and $\bot_2$, such that if the channel content of $c_1$ and $c_2$ is $l_1$ and $l_2$ in $\textit{CM}_{(A,B)}$, respectively, then the channel content of $c$ is $\bot_2 \cdot l_2 \cdot \bot_2 \cdot \bot_1 \cdot l_1 \cdot \bot_1$ in $\textit{CM}'_{(A,B)}$. Each execution of $\textit{CM}_{(A,B)}$ is transformed into an execution of $\textit{CM}'_{(A,B)}$ as follows:
In the beginning, $\textit{CM}'_{(A,B)}$ first puts 
{\color {red}two $\bot_1$ and then two $\bot_2$ into $c$.}
Then, it translates each transition of $\textit{CM}_{(A,B)}$ as follows:

\begin{itemize}
\item[-] $c_1 ? a$ is translated into (1) do $c ? \bot_1 \cdot c ! \bot_1 \cdot c ? a$, then repeatedly receive non-$\bot_1$ data from channel $c$ and put it back into channel $c$, until receiving a $\bot_1$ from channel $c$ and put it back into channel $c$, and (2) do $c ? \bot_2 \cdot c ! \bot_2$, then repeatedly receive a non-$\bot_2$ data from channel $c$ and put it back into channel $c$, until receiving a $\bot_2$ and put it back into channel $c$.

\item[-] $c_1 ! a$ is translated into (1) do $c ? \bot_1 \cdot c ! \bot_1$, then repeatedly receive non-$\bot_1$ data from channel $c$ and put it back into channel $c$, until receiving a $\bot_1$ from channel $c$, and then do $c ! a \cdot c ! \bot_1$, the remaining part is the same as step (2) of $c_1 ? a$.

\item[-] The case of $c_2 ? a$ and $c_2 ! a$ is similar and we omit its detailed description here.
\end{itemize}

Therefore, we could draw $\textit{CM}'_{(A,B)}$ similarly as \figurename \ref{fig:M_{(A,B)},
  original}, and each transition of $\textit{CM}'_{(A,B)}$ is now a ``extended version transition''
as we discussed above. Therefore, there are ``$\textit{CM}'_{(A,B)}$'s version of $s_0$, $s_1$ and
$s_{\textit{trap}}$'', and when no confusion is possible we also call them $s_0$, $s_1$ and
$s_{\textit{trap}}$, respectively.
The following lemma states that, the CPCP of 
$A$ and $B$ can be reduced into an infinite execution problem of the lossy channel
machine $\textit{CM}'_{(A,B)}$. The proof is direct from
\cite{DBLP:journals/iandc/AbdullaJ96a} jointly with $\textit{CM}'_{(A,B)}$ and therefore
we omit it here.

\begin{lemma}
\label{lemma:reducing CPCP into a infinite execution problem of lossy channel machine M'{(A,B)}}
There is a CPCP solution 
for sequences $A$ and $B$ of finite sequences, if and only if there is an infinite execution of $\textit{CM}'_{(A,B)}$ that visits $s_1$ infinitely often.
\end{lemma}
}

\forget{
The following lemma states that the CPCP of $A$ and $B$ can be reduced into an infinite execution problem of the lossy channel machine $\textit{CM}_{(A,B)}$. This lemma is directly obtained from 
\cite{DBLP:journals/iandc/AbdullaJ96a}.

\begin{lemma}
\label{lemma:reducing CPCP into a infinite execution problem of lossy channel machine M'{(A,B)}}
There is a CPCP solution for sequences $A$ and $B$ of finite sequences, if and only if there is an infinite execution of $\textit{CM}_{(A,B)}$ that visits $s_1$ infinitely often.
\end{lemma}
}

\subsection{Libraries for Four Liveness Properties}
\label{subsec:concurrent data structures for lock-freedom and deadlock-freedom}

In this subsection, we propose our library $\mathcal{L}(A,B)$ that is generated from $\textit{CM}'_{(A,B)}$ 
and simulates the executions of $\textit{CM}'_{(A,B)}$. 
This library contains two methods $M_1$ and $M_2$. Similarly to \cite{DBLP:conf/popl/AtigBBM10,DBLP:conf/atva/WangLW15}, we use the collaboration of two methods to simulate a lossy channel. 
Our library requires that each method be fixed to a single process when simulating infinite execution of $\textit{CM}'_{(A,B)}$.
Methods of our library work differently when simulating lossy channel machine transitions of different phases. 

Let us now explain in detail the construction of $\mathcal{L}(A,B)$.
$\mathcal{L}(A,B)$ uses 
the following memory locations: 
$x_1$, $y_1$, $x_2$, $y_2$, $\textit{phase}$, $\textit{failSimu}$ and $\textit{firstM1}$.
$\textit{phase}$ stores the phase of $\textit{CM}'_{(A,B)}$, and its initial value is $\textit{guess}$.
$\textit{failSimu}$ is a flag indicating the failure of the simulation of $\textit{CM}'_{(A,B)}$, and its initial value is $\textit{false}$.
$\textit{firstM1}$ is used to indicate the first execution of $M_1$, and its initial value is $\textit{true}$.

The pseudo-code of $M_1$ and $M_2$ are shown in Algorithms \ref{Method1OfLockFreedom} and \ref{Method2OfLockFreedom}, respectively.
$\bot_s$ and $\bot_e$ are two new symbols not contained in $\textit{CM}'_{(A,B)}$.
For brevity, we use the following notations.
We use $\textit{writeOne}(x,a)$ to represent the sequence of commands writing $a$ followed by $\sharp$ into $x$.
We use $\textit{writeSeq}(x,a_1 \cdot \ldots \cdot a_k)$ to represent the sequence of commands writing $a_1 \cdot \sharp \cdot \ldots \cdot a_k \cdot \sharp$ into $x$.
$\sharp$ is a delimiter that ensures one update of a memory location will not be read twice.
\forget{For example, channel content $a \cdot b$ will be transformed into $(x,a) \cdot
(x,\sharp) \cdot (x,b) \cdot (x,\sharp)$ for some memory location $x$ in the store buffer.}
We use $v:=\textit{readOne}(x)$ to represent the sequence of commands reading $e$ followed by $\sharp$ from $x$ for some $e \neq \sharp$ and then assigning $e$ to $v$.
Moreover, if the values read do not correspond with $e$ followed by $\sharp$ we set $\textit{failSimu}$ to $\textit{true}$ and then let the current method return. This will terminate the simulation procedure.
Similarly, $v:=\textit{readRule}(x)$ reads a transition rule followed by $\sharp$ from $x$, and assign the rule to $v$.
We use $\textit{transportData}(z_1,z_2)$ to represent repeatedly using
$v=\textit{readOne}(z_1)$ to read an update of $z_1$ and using $\textit{writeOne}(z_2,v)$ to write it to $z_2$, until reading $\bot_e$ from $z_1$ and writing $\bot_e$ to $z_2$.
Given a transition rule $r$, let $valueRead(r)$ and $valueWritten(r)$ be the value received and sent by $r$, respectively. The symbols $s_0$ and $s_1$ in the pseudo-code of $M_1$ represent the corresponding state of $\textit{CM}_{(A,B)}$.

\redt{
\begin{algorithm}[t]
\KwIn {an arbitrary argument}
\While {$\textit{true}$} {
If $\textit{failSimu}$, then \KwRet; \\
\If {$\textit{firstM1}$}{
guess a transition rule $r_1$ that starts from $s_0$; \\
$\textit{writeSeq}(x_1,r_1 \cdot \bot_s \cdot \bot_e)$; \\
$\textit{firstM1}=\textit{false}$; \\
}
\Else {
$r_1:=\textit{readRule}(y_2)$; \\
let $z_1:=valueRead(r_1)$ and $z_2:=valueWritten(r_1)$; \\
$\textit{readOne}(y_2,\bot_s)$; \\
if $z_1 \neq \epsilon$, then $\textit{readOne}(y_2,z_1)$; \\
guess a transition rule $r_2$ starts from the destination state of $r_1$; \\
$\textit{writeSeq}(x_1,r_2 \cdot \bot_s)$; \\
\While{$\textit{true}$}{
$tmp:=\textit{readOne}(y_2)$; \\
if $tmp=\bot_e$, then break; \\
$\textit{writeOne}(x_1,tmp)$; \\
}
$\textit{writeSeq}(x_1,z_2 \cdot \bot_e)$; \\
}
if $\textit{phase}=\textit{guess}$ and the destination state of $r_2$ is $s_1$, then set $\textit{phase}$ to $\textit{check}$; \\
$\textit{transportData}(y_1,x_2)$; \\
if $\textit{phase}=\textit{guess}$, then \KwRet; \\
}
\caption{$M_1$}
\label{Method1OfLockFreedom}
\end{algorithm}
}

\noindent\begin{algorithm}[!h]
\KwIn {an arbitrary argument}
\While {$\textit{true}$} {
if $\textit{failSimu}$, then \KwRet; \\
$\textit{transportData}(x_1,y_1)$; \\
$\textit{transportData}(x_2,y_2)$; \\
if $\textit{phase}=\textit{guess}$, then \KwRet; \\
}
\caption{$M_2$}
\label{Method2OfLockFreedom}
\end{algorithm}

\begin{figure}[tbp]
  \centering
  \includegraphics[width=1.0\textwidth]{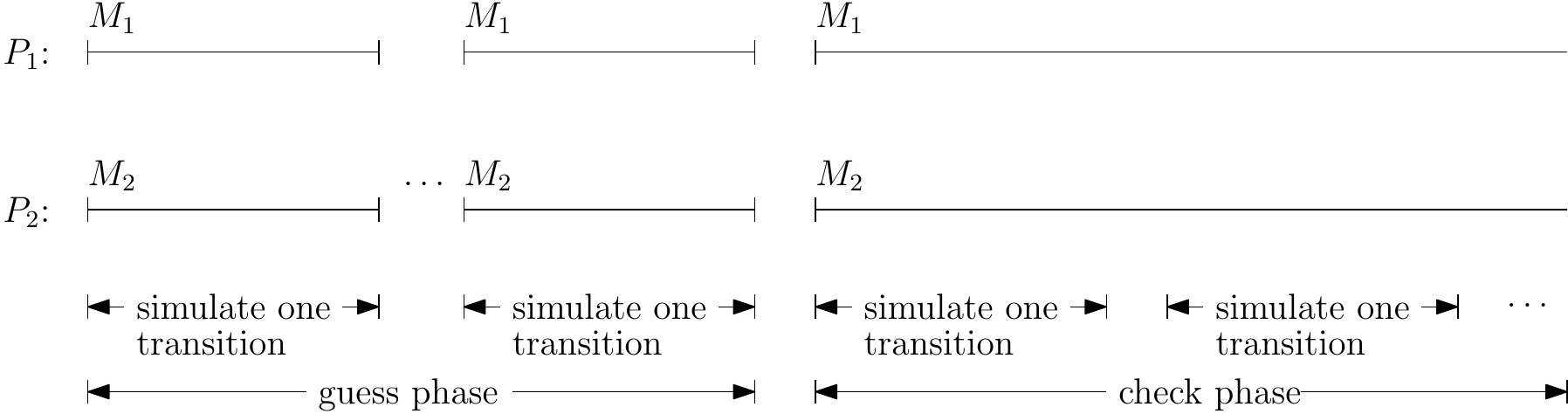}
  \caption{One execution of $\mathcal{L}(A,B)$.
  }
  \label{fig:execution for lock-freedom with one check success and one check fail}
\end{figure}

\figurename~\ref{fig:execution for lock-freedom with one check success and one
  check fail} illustrates a possible execution of $\mathcal{L}(A,B)$.
$M_1$ and $M_2$ work differently in the different phases. In the guess phase, $M_1$ and $M_2$ return after simulating one lossy channel machine transition, while in the check phase, $M_1$ and $M_2$ keep working until the simulation procedure fails.


Assume that $M_1$ (resp., $M_2$) runs on process $P_1$ (resp., process $P_2$). 
To simulate one lossy channel machine transition with channel content $l$, we first store $r_1 \cdot \bot_s \cdot l \cdot \bot_e$ in process $P_2$'s store buffer as buffered items of 
$y_2$, where $r_1$ is the transition rule of this transition, and $\bot_s$ and $\bot_e$ are additional symbols indicating the start and end of channel content of $\textit{CM}'_{(A,B)}$, respectively. Then, the procedure for simulating one lossy channel machine transition is as follows:

\begin{itemize}
\item $M_1$ reads the transition rule $r_1$ and channel content of $\textit{CM}'_{(A,B)}$ by reading all the updates of $y_2$. After reading $r_1$, $M_2$ non-deterministically chooses a transition rule $r_2$ of the lossy channel machine. Such rule should begin from the destination state of $r_1$.

\item There are four points for information update and transfer
between $M_1$ and $M_2$: (1) According to transition rule $r_1$, $M_1$ modifies and writes all the updates of $y_2$ into 
    $x_1$, (2) then $M_2$ reads all the updates of $x_1$ and writes all the updates into 
    $y_1$, (3) then $M_1$ reads all the updates of $y_1$ and writes all the updates into 
    $x_2$, and (4) finally, $M_2$ reads all the updates of $x_2$ and writes all the updates into $y_2$. To read all the updates of a memory location, we need to repeatedly read until read $\bot_e$, which indicates the end of channel content.

    Since there is no item in the buffer at the beginning of the simulation procedure, to simulate the first lossy channel machine transition $M_1$ directly writes $r \cdot \sharp \cdot \bot_s \cdot \sharp \cdot \bot_e \cdot \sharp$ to $x_1$ and does not need to read updates from $y_2$, where $r$ is a transition rule from $s_0$. This is the reason for using $\textit{firstM1}$.

\item $M_1$ is also responsible for modifying the phase (stored in the memory location $\textit{phase}$). If the last lossy channel machine transition simulated belongs to the guess phase and the destination state of $r_1$ is $s_1$, $M_1$ changes the memory location $\textit{phase}$ to check.
\end{itemize}

The reason why we need to update and transfer information between $M_1$ and $M_2$
is to deal with the case when an update of $\bot_e$ 
is not captured. 
Let us first consider a simple but infeasible solution: $M_1$ reads updates from
$y_2$ (until reading $\bot_e$), and modifies and writes all the updates into
$x_1$; while $M_2$ repeatedly reads an update of $x_1$ and writes it into $y_2$,
until reading $\bot_e$. This solution can not deal with the case when updates of
$\bot_e$ for $y_2$ is not seen by $M_1$, and will make $M_1$ and $M_2$ fall into
infinite loop that violates liveness.
This may happen
in simulating each lossy channel machine transition, and thus, introduces ``false negatives'' to four liveness properties.
To deal with this case, we need to break the infinite loop and avoid directly
writing the updates of $y_2$ back into $x_1$. Instead, in our update and
transfer points, we exhaust the updates of $y_2$, which are written to $x_1$,
and later written to $y_1$ instead of $y_2$. Therefore, there is no infinite
loop even if updates of $\bot_e$ for $y_2$ are lost.



Assume that we can successfully simulate one transition of
$\textit{CM}'_{(A,B)}$ with one $M_1$ running on process $P_1$ and one $M_2$
running on process $P_2$. Then the most general client on process $P_1$
(resp., on process $P_2$) can call $M_1$ and $M_2$. Perhaps surprisingly, the
only possible way to simulate the second transition of $\textit{CM}'_{(A,B)}$
is to let $M_1$ and $M_2$ to continue to run on processes $P_1$ and $P_2$, respectively. Let us explain why other choices fail to simulate the second transition: (1) If both processes run method $M_1$, then they both 
require reading updates of $y_1$. Since there is no buffered item for $y_1$, and none of them write to $y_1$, both $M_1$ fail the simulation.
(2) If both processes run $M_2$, we arrive at a similar situation. (3) If $M_1$
and $M_2$ run on processes $P_2$ and $P_1$, respectively. $M_1$ requires reading
the updates on $y_2$, and the only possible buffered $y_2$ items are in process $P_2$'s buffer. According to the TSO memory model, $M_1$ always reads the same value for $y_2$ and thus fails to do $\textit{readOne}(y_2,\_)$. Thus, $M_1$ fails the simulation.
Therefore, we essentially ``fix methods to processes'' without adding specific commands for checking process id.

%

\forget{
\noindent\begin{algorithm}[!h]
\KwIn {two memory location $z_1$ and $z_2$}
$tmp:=\textit{readOne}(z_1)$; \\
\While{$tmp \neq \bot_e$}{
$\textit{writeOne}(z_2,tmp)$; \\
$tmp:=\textit{readOne}(z_1)$; \\
}
$\textit{writeOne}(z_2,tmp)$; \\
\caption{$\textit{TransportData}(y,u)$}
\label{Method5OfLockFreedom}
\end{algorithm}
}

\forget{
This library contains five methods $M_1$, $M_2$, $M_3$, $M_4$ and $M_5$.
Similarly to \cite{DBLP:conf/popl/AtigBBM10}, we use the collaboration of two
methods to simulate each lossy channel. Since $\mathcal{L}(A,B)$ has two
channels $c_1$ and $c_2$, we use the collaboration of $M_1$ and $M_2$ to
simulate $c_1$, and use the collaboration of $M_3$ and $M_4$ to simulate $c_2$.
$M_5$ is used to determine the following lossy channel machine transition rule
to be simulated, communicates to $M_1$, $M_2$, $M_3$ and $M_4$ the current phase of
$\textit{CM}_{(A,B)}$, and synchronizes with $M_1$ and $M_3$. We assume that
each $M_i$ ($1 \leq i \leq 5$) runs on its own process $i$, and use the
$\textit{checkPID}$ command to ensure that a $M_i$ runs on process $i$,
otherwise returning immediately. Hence, in the case there $\textit{checkPID}$
fails, on $M_i$, this method call does not influence the simulation of the lossy
channel machine.

The procedure for simulating one lossy channel machine transition is as
follows:
\begin{itemize}
\item First, $M_5$ non-deterministically chooses a transition rule of the
  lossy channel machine and stores it in the memory location $\textit{rule}$. Such
  rule should begin from the destination state of the rule of the last lossy
  channel machine transition simulated by our library. $M_5$ is also
  responsible for modifying the phase (stored in the memory location
  $\textit{phase}$). If the last lossy channel machine transition simulated
  belongs to the guess phase and the destination state of $\textit{rule}$ is
  $s_1$, $M_5$ changes the memory location $\textit{phase}$ to check.
\item Then, $M_1$ (resp., $M_3$) reads the updated transition rule. $M_1$ and
  $M_2$ (resp., $M_3$ and $M_4$) collaborate to simulate the channel operation
  of $c_1$ (resp., of $c_2$) of this transition rule (if any).
\item After that, $M_1$ (resp., $M_3$) writes $1$ into a memory location
  $\textit{ack1}$ (resp., $\textit{ack2}$), which notifies to $M_5$ the end of
  one channel operation. $M_5$ waits until receiving the updated
  $\textit{ack1}$ and $\textit{ack2}$.
\end{itemize}
This completes the simulation procedure of one lossy channel machine
transition.

\begin{algorithm}[t]
\KwIn {an arbitrary argument}
If not $\textit{checkPID}(1)$ then \KwRet; \\
\While {$\textit{tflag}=false$} {
do $r:=\textit{readOne}(\textit{rule})$ and reads a rule in $\Delta$ (Otherwise, \KwRet); \\
let $u_1:=valueRead(r,c_1)$ and $v_1:=valueWritten(r,c_1)$; \\
If $u_1 \neq \epsilon$ then $\textit{readOne}(y_1,u_1)$; \\
If $v_1 \neq \epsilon$ then $\textit{WriteOne}(x_1,v_1)$; \\
$\textit{writeOne}(\textit{ack1},1)$; \\
If $\textit{phase}$ = guess then \KwRet; \\
}
\KwRet; \\
\caption{$M_1$}
\label{Method1OfLockFreedom}
\end{algorithm}

\noindent\begin{algorithm}[!h]
\KwIn {an arbitrary argument}
If not $\textit{checkPID}(2)$ then \KwRet; \\
\While {$\textit{tflag}=false$} {
$tmp:=\textit{readOne}(x_1)$; \\
$\textit{writeOne}(y_1,tmp)$; \\
If $\textit{phase}$=guess then \KwRet; \\
}
\KwRet;
\caption{$M_2$}
\label{Method2OfLockFreedom}
\end{algorithm}

\noindent\begin{algorithm}[!h]
\KwIn {an arbitrary argument}
If not $\textit{checkPID}(5)$ then \KwRet; \\
If $tflag=true$ then \KwRet; \\
\If {$\textit{rule}=\bot$} {
non-deterministically choose a transition rule $r=(s_0,\_,\_,\_,s',\_,\_) \in \Delta$ for some lossy channel machine state $s'$, and do $\textit{writeOne}(\textit{rule},r)$; \\
If $\textit{phase}=guess \wedge s'=s_1$ then set $phase$ to $check$; \\
$\textit{readOne}(\textit{ack1},1)$; \\
$\textit{readOne}(\textit{ack2},1)$; \\
\KwRet; \\
}
\Else {
\While{$\textit{tfalg}=false$}{
assume that $\textit{rule}=(s,u_1,u_2,\alpha,s',v_1,v_2)$ for some lossy channel machine states $s$ and $s'$; \\
non-deterministically choose a transition rule $r=(s',\_,\_,\_,s'',\_,\_) \in \Delta$ for some lossy channel machine state $s''$ and do $\textit{writeOne}(\textit{rule},r)$;\\
If $\textit{phase}=guess \wedge s'=s_1$ then set $phase$ to $check$; \\
$\textit{readOne}(\textit{ack1},1)$; \\
$\textit{readOne}(\textit{ack2},1)$; \\
If $\textit{phase}$=guess then  \KwRet; \\
}
\KwRet; \\
}
\caption{$M_5$}
\label{Method5OfLockFreedom}
\end{algorithm}

Throughout, $M_1$ reads values from memory location $y_1$, and
writes values into memory location $x_1$; while $M_2$ reads values from $x_1$
and writes values into $y_1$. We take the channel contents of $c_1$ to be the
buffered values of $x_1$ in the store buffer of process $1$ concatenated with the
buffered values of $y_1$ in the store buffer of process $2$.
To simulate a send operation $c_1!a$, $M_1$ writes $a$ to $x_1$, while to
simulate a receive operation $c_1?a$, $M_1$ reads $a$ from $y_1$. Both $M_1$
and $M_2$ work differently depending on the phase. In the guess phase, $M_1$
returns after it reads an updated rule, simulates one channel operation and
synchronizes with $M_5$; and $M_2$ returns after it reads a value from $x_1$
and writes it back into $y_1$. In the check phase, $M_1$ repeatedly reads an
updated rule, simulates one channel operation and synchronizes with $M_5$,
until the simulation of $\textit{CM}_{(A,B)}$ 
fails; while $M_2$ repeatedly reads values from $x_1$ and writes them into
$y_1$, until the simulation of $\textit{CM}_{(A,B)}$ 
fails.
In this way, we simulate the non-lossy part of channel operations of $c_1$. The
case when $M_1$ (resp., $M_2$) misses one or more updates of $y_1$ (resp. of
$x_1$) simulates the loss of information in channel. In this way, we simulate
the channel operations on lossy channel $c_1$.
Similarly, $M_3$ and $M_4$ use memory locations $x_2$ and $y_2$ to simulate
channel operations on the lossy channel $c_2$, and work differently depending on the phase.
$M_5$ also operates differently depending on the phases. In the guess phase,
each call to $M_5$ returns after it updates $\textit{rule}$ (and possibly
$\textit{phase}$), and reads the updates from $\textit{ack1}$ and
$\textit{ack2}$. In the check phase, $M_5$ repeatedly updates $\textit{rule}$ 
and reads the updates from $\textit{ack1}$ and
$\textit{ack2}$, until the simulation of $\textit{CM}_{(A,B)}$ 
fails.

\forget{
$M_1$ reads values from {\color {red}a memory location} 
$y$, and writes values into {\color {red}a memory location} 
$x$; while $M_2$ reads from $x$ and writes into $y$.
{\color {red}We take the channel contents of $\textit{CM}'_{(A,B)}$ to be 
the buffered values of $x$ in the store buffer of process $1$.
}
$M_1$ contains a loop, and {\color {red}it simulates one lossy channel machine transition 
in each
round of the loop.} {\color {red}The procedure of simulating one lossy channel machine transition is as follows: First, $M_1$ non-deterministically chooses a transition rule of the lossy channel machine and stores it in memory location $\textit{rule}$. Such rule should begin from the destination state of the rule of the last lossy channel machine transition simulated by $M_1$. $M_1$ then simulate channel operation of this transition rule.} 
{\color {red}A send operation $c!a$ (resp., receive operation $c?a$) 
require $M_1$ writing $a$ to $x$ (resp., reading $a$ from $y$).}
\forget{
$M_1$ stores {\color {red}the current transition rule that is being simulated in a variable $\textit{rule}$.} 
{\color{orange} GP: what does this
  mean?} 
  We take the channel contents of $\textit{CM}'_{(A,B)}$
to be {\color {red}the values buffered in the store buffer of $x$ for process
1}.\footnote{Notice that 
{\color{red}$M_1$ writes to $x$, $\textit{rule}$, and possibly $\textit{tflag}$.}} $M_1$ contains a loop, and it simulates one channel operation in each
round of the loop. A silent operation is simulated by $M_1$ changing the control
state of $\textit{CM}'_{(A,B)}$. A send operation $c!a$ (resp., receive
operation $c?a$) additionally require $M_1$ writing $a$ to $x$ (resp., reading
$a$ from $y$).
}
$M_2$ works in two phases as follows:
In a first  guess phase, $M_2$ returns after it reads a value
from $x$ and writes it back into $y$.
In a second check phase, $M_2$ repeatedly reads values from $x$ and writes them into
$y$, until the simulation of $\textit{CM}'_{(A,B)}$ finishes or fails.
Thus, the buffered values of $y$ in the store buffer of process $2$ are the ``oldest
contents of channel $c$''.
In this way, we simulate the non-lossy part of $\textit{CM}'_{(A,B)}$. The
case when $M_1$ (resp., $M_2$) misses one or more updates of $y$ (resp. of $x$)
simulates the loss of information in channel.
In this way, we simulate the lossy channel machine $\textit{CM}'_{(A,B)}$.
}

\forget{
  $M_1$ works as follows: initially $M_1$ writes $s_0 \cdot \bot_s \cdot \bot_e$
  into the buffer and randomly guesses a transition to simulate 
  the first lossy channel machine transition.
  This represents that the initial ``channel content'' of $\textit{CM}'_{(A,B)}$ is empty.
  Here $\bot_s$ and $\bot_e$ are additional symbols, where $\bot_s$ indicates the
  start of channel content of $\textit{CM}'_{(A,B)}$, and $\bot_e$ indicates the end of the
  channel content of the $\textit{CM}'_{(A,B)}$.
  The guessed transition rule is stored in a predefined memory location $rule$.
  Then, $M_1$ begins a loop, where each round of the loop simulates 
  one transition of the lossy channel machine.
  In each round, $M_1$ ``modifies the channel content'' according to the
  transition rule stored in $rule$. At the end of each round, $M_1$ finishes 
  simulating this lossy channel machine transition and randomly guesses a transition rule for the next 
  transition of the lossy channel machine. $M_2$ works as follows: in the guess phase, $M_2$ returns after it reads an element from $x$ and writes
  it into $y$. In the check phase, $M_2$ repeatedly reads elements from $x$ and writes them into $y$.
}

\forget{
At the beginning of execution, $M_1$ writes $s_0 \cdot \bot_s \cdot \bot_e$ into
$x$. This represents that the initial ``channel content'' of $M'_{(A,B)}$ is empty.
Here $\bot_s$ and $\bot_e$ are additional symbols, where $\bot_s$ indicates the
start of channel content of $M'_{(A,B)}$, and $\bot_e$ indicates the end of
channel content of $M'_{(A,B)}$. Additionally, $M_1$ is responsible for choosing
a transition rule (stored in a predefined memory location $rule$) and changing
the channel content according to the transition rule stored therein.
}

\forget{
If the lossy channel machine execution that is simulated is of infinite length,
in a 
library execution that successfully simulates this lossy channel machine execution,
$M_1$ does not return, while $M_2$ returns finitely many times.
This makes the library execution still lock-free if the lossy channel machine
execution infinitely loops in the guess phase, and makes the library execution
not lock-free 
if the lossy channel machine infinitely loops in
the check phase.
}

Distinguishing the different phases of $M_1$, $M_2$, $M_3$, $M_4$
and $M_5$ enables us to encode lock-freedom. Given
a library execution $t_1$ which successfully simulates an infinite lossy
channel machine execution $t_2$ that infinitely loops in the guess phase,
according to the behavior of $M_1$, $M_2$, $M_3$, $M_4$ and $M_5$ in the guess
phase, we can see that $t_1$ is lock-free and contains infinite number of
return actions. 
If such $t_2$ infinitely loops in the check phase, since $M_1$, $M_2$, $M_3$,
$M_4$ and $M_5$ never return in the check phase, if the simulation continues
indefinitely, we can see that $t_1$ violates lock-freedom since none of $M_1$,
$M_2$, $M_3$, $M_4$ or $M_5$ returns.
Thus, we relate the existence of solutions of CPCP with existence of lock-freedom violations. 
On the other hand, we have to make every execution that fails to simulate the lossy
channel machine satisfy lock-freedom.
To that end, we set a flag $tflag$ (for terminate), upon which $M_1$, $M_2$, $M_3$, $M_4$ and
$M_5$ return trivially.

\forget{
Distinguishing the different phases of $M_2$ enables us to {\color {red}encode lock-freedom.}
{\color {red}Given a library execution $t_1$ which successfully simulates an infinite lossy
channel machine execution $t_2$, since (1) each transition of $\textit{CM}_{(A,B)}$ is simulated by $\textit{CM}'_{(A,B)}$ with several send and receive operations, and (2) in guess phase, each receive operation requires at least one $M_2$, we can see that $t_1$ is lock-free if $t_2$ infinitely loops in
the guess phase.} {\color {red}However}, $t_1$ is not lock-free if $t_2$ infinitely loops in the check phase {\color {red}and visits $s_1$ infinitely many times.}
%
{\color {red}Thus, we relate the existence of solutions of CPCP with existence of lock-freedom violations.}
On the other hand, we should make every execution that either 
simulates a finite lossy channel machine execution {\color {red}end in $s_{\textit{trap}}$},
or 
fails to simulate the lossy channel machine transitions to satisfy
lock-freedom. In any of these two cases we set a flag $tflag$ and subsequently
$M_1$ and $M_2$ return trivially.}

\figurename \ref{fig:execution for lock-freedom with one check success and one
  check fail} illustrates a possible execution of $\mathcal{L}(A,B)$. In this
case the guess phase expands over several returns of $M_1$, $M_2$, $M_3$, $M_4$
and $M_5$. We specifically draw how $M_1$, $M_3$ and $M_5$ communicate to
simulate the first lossy channel machine transition. Here we assume that the
first lossy channel machine transition does a send operation on channel $c_2$
and thus, $M_3$ writes $x_2$. Then, $M_4$ reads from $x_2$ and writes to $y_2$.
Note that since there is no write to $x_1$, $M_2$ does no work when simulating
the first lossy channel machine transition. The check phase is the long spanning
set of calls to $M_1$, $M_2$, $M_3$, $M_4$ and $M_5$. The check phase
successfully simulates one lossy channel machine transition, and fails when
simulating the second lossy channel machine transition. After this point $M_1$,
$M_2$, $M_3$, $M_4$ and $M_5$ return trivially, represented with the short
method calls at the end.

\forget{
\figurename \ref{fig:execution for lock-freedom with one check success and one
  check fail} illustrates a possible execution of $\mathcal{L}(A,B)$. In this
case the guess phase expands over several returns of $M_2$, and then the check
phase successfully simulates one lossy channel machine transition. This is the
long spanning call to $M_2$. The execution illustrates a later fail when
simulating the second lossy channel machine transition, and after this point both
$M_1$ and $M_2$ return trivially, represented with the short method calls at the
end.}

\forget{
If a library execution successfully simulates an infinite lossy channel machine execution, then the library execution is lock-free if the lossy channel machine execution infinitely loops in the guess phase, and is not lock-free if the lossy channel machine execution infinitely loops in the check phase.
Note that the set of infinite executions of $\textit{CM}'_{(A,B)}$
that visits $s_1$ infinitely many times is just the set of infinite executions
of $\textit{CM}'_{(A,B)}$ that loops infinitely in the check phase.
If the lossy channel machine execution that is simulated is an finite execution,
it will end in state $s_{\textit{trap}}$.
Any library execution which reaches $s_{\textit{trap}}$ should finish the
simulation.
To ensure such 
execution is still lock-free, we make $M_1$ and $M_2$ only able to
trivially return after we reach $s_{\textit{trap}}$.

On the other hand, we should make every execution that fails to simulate the lossy
channel machine transitions satisfies lock-freedom.
When we find that the simulation procedure fails, we 
force $M_1$ and $M_2$ to trivially return after that point.
\figurename \ref{fig:execution for lock-freedom with one check success and one
  check fail} shows an execution of $\mathcal{L}(A,B)$. It finishes its guess
phase with several returns of $M_2$, and in the check phase it successfully
simulates one lossy channel machine transition, and it fails when simulating the
second lossy channel machine transition.
After this point, both $M_1$ and $M_2$ result in a method that trivially return.
}



\forget{
We require $\mathcal{L}(A,B)$ to contain two methods, $M_1$ and $M_2$. We also require $M_1$ and $M_2$ to be fixed to process $P_1$ and $P_2$, respectively. Or we can say, if $M_1$ (resp., $M_2$) runs in a non-$P_1$ process (resp., a non-$P_2$ process), it will trivially return. This can be done by using $\textit{getPID}()$ commands.
\figurename \ref{fig:execution for lock-freedom with one check success and one check fail} shows how $\mathcal{L}$ simulates one execution $t$ of $M'_{(A,B)}$.
A downward arrow (resp., upward arrow) represents a modification of $x$ (resp., of $y$) done by $M_1$ (resp., done by $M_2$) is detected by $M_2$ (resp., by $M_1$).
$M_1$ additionally modify ``channel content'' according to transition rules.
To comply with lock-freedom and make $M_2$ not blocked when in guess phase, in the guess phase, $M_2$ returns as long as it modify $y$ one time.
In the check phase, $M_2$ keeps working until the simulation of lossy channel machine can not proceed and a flag $bflag$ is set to true.
In the check phase of $t$, we can see when $M_2$ reads $s_{\textit{trap}}$ from $x$, it returns.
}

\begin{figure}[tbp]
  \centering
  \includegraphics[width=0.738\textwidth]{PIC_SimluateFiniteTimesForLockFree.pdf}
\vspace{-10pt}
  \caption{One execution of $\mathcal{L}(A,B)$.
  }
  \label{fig:execution for lock-freedom with one check success and one check fail}
\vspace{-15pt}
\end{figure}

Let us now explain in detail the construction of $\mathcal{L}(A,B)$.
$\mathcal{L}(A,B)$ uses 
the following memory locations: 
$x_1$, $y_1$, $x_2$, $y_2$, $\textit{ack1}$, $\textit{ack2}$, $\textit{phase}$,
$\textit{tflag}$ and $\textit{rule}$. Memory location $phase$ is used by $M_5$
to communicate to $M_1$, $M_2$, $M_3$ and $M_4$ whether the current phase is
guess or check, and its initial value is $guess$. Memory location
$\textit{tflag}$ is a flag indicating the failure of the simulation of the lossy
channel machine. Its initial value is $false$. Memory location $\textit{rule}$
is used by $M_5$ to stores the lossy channel machine transition rule that is
being simulated. The initial value of $\textit{rule}$ is a special value
$\bot$.

\forget{
Let us now explain in detail the construction of $\mathcal{L}(A,B)$.
$\mathcal{L}(A,B)$ uses five memory locations: $x$, $y$, $phase$, $tflag$ and $rule$.
Location $phase$ is used to tell $M_2$ {whether the current phase is guess or
check,} and its initial value is $guess$.
Location $tflag$ stores a flag with initial value $false$. If the value is $true$, then the simulation of the lossy channel machine has already finished or failed.
Location $rule$ stores the current transition rule that is being simulated.
Note that, $x$ and $rule$ can only be written by $M_1$, while $y$ and $phase$
can only be written by $M_2$. $tflag$ can be written by both $M_1$ and $M_2$,
and the only possible update to $tflag$ is setting it to $true$.
}

We now present methods $M_1$, $M_2$ and $M_5$ in the pseudo-code, shown in Algorithms \ref{Method1OfLockFreedom}, \ref{Method2OfLockFreedom} and \ref{Method5OfLockFreedom}, respectively.
For brevity, we use the following notations.
We use $\textit{writeOne}(x,a)$ to represent the sequence of commands writing $\sharp$ followed by $a$ 
into $x$. The symbol $\sharp$ is used as the delimiter to ensure that a value will
not be read twice.
We use $v:=\textit{readOne}(x)$ to represent the sequence of commands reading
$\sharp$ followed by a value $e$ from $x$ for some $e \neq \sharp$ and then assigning $e$ to $v$.
Moreover, if the values read does not correspond with $\sharp$ followed by some
$e$, we set $\textit{tflag}$ to $true$ and then let the current method return. 
We uniformly write a transition rule as $(q,u_1,u_2,\alpha,q',v_1,v_2)$, where $u_1$, $u_2,$ $v_1$ and $v_2$ can be either an value or $\epsilon$. 
$(q,u_1,u_2,\alpha,q',v_1,v_2)$ represents a transition rule that
changes state from $q$ to $q'$ with transition label $\alpha$, receives $u_1$
from channel $c_1$ (if any), receives $u_2$ from channel $c_2$ (if any), sends
$v_1$ to channel $c_1$ (if any) and sends $v_2$ to channel $c_2$ (if any). We can see that $u_1$ and $v_1$ can not both be a value, and $u_2$ and $v_2$ can not both be a value. Given a transition rule $r=(q,u_1,u_2,\alpha,q',v_1,v_2)$, we define $valueRead(r,c_1)=u_1$, $valueRead(r,c_2)=u_2$, $valueWritten(r,c_1)=v_1$ and $valueWritten(r,c_2)=v_2$. 
This library is designed to run on
five processes, and thus, the 
$\textit{checkPID}$ command only considers process
identifiers $1$ to $5$.
The symbols $s_0$ and $s_1$ in the pseudo-code of $M_5$ represent
the states of the lossy channel machine $\textit{CM}_{(A,B)}$.

The pseudo-code of $M_3$ (omitted) is obtained from $M_1$ by transforming
$\textit{checkPID}(1)$, $c_1$, $x_1$, $y_1$ and $\textit{ack1}$ into
$\textit{checkPID(3)}$, $c_2$, $x_2$, $y_2$ and $\textit{ack2}$, respectively.
The pseudo-code of $M_4$ (omitted) is obtained from $M_2$ by transforming
$\textit{checkPID}(2)$, $x_1$ and $y_1$ into $\textit{checkPID}(4)$, $x_2$ and
$y_2$, respectively.
}

\subsection{Undecidability of Four Liveness Properties}
\label{subsec:undecidability of four liveness properties}

The following theorem states that lock-freedom, wait-freedom,
deadlock-freedom and starvation-freedom are all undecidable on TSO for a
bounded number of processes. Perhaps surprisingly, we prove this theorem with
the same library $\mathcal{L}(A,B)$. 

\begin{theorem}
  \label{theorem:lock-freedom is undecidable}
  The problems of checking lock-freedom, wait-freedom, deadlock-freedom and starvation-freedom of a given library for a bounded number of processes 
  are undecidable on TSO.
\end{theorem}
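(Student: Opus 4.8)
The plan is to reduce CPCP to the \emph{violation} of each of the four properties, all at once and with the single library $\mathcal{L}(A,B)$ run on exactly $n=2$ processes ($M_1$ on $P_1$, $M_2$ on $P_2$). Since the construction $(A,B)\mapsto\mathcal{L}(A,B)$ is effective and CPCP is undecidable, it suffices to establish the equivalence: $\mathcal{L}(A,B)$ violates lock-freedom (resp.\ wait-, deadlock-, starvation-freedom) for $n=2$ if and only if CPCP has a solution for $A$ and $B$; a decision procedure for any of the four properties would then decide CPCP. By the Lemma already proved, a CPCP solution is equivalent to $\textit{CM}'_{(A,B)}$ having an infinite execution that visits $s_1$ infinitely often, so the whole task reduces to linking such \emph{accepting} executions with library-level violations.

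The technical core is a single correspondence between non-returning library behaviour and accepting runs. First I would prove: for every infinite execution $t$ of $\llbracket\mathcal{L}(A,B),2\rrbracket$, if some method call in $t$ never returns (that is, $\neg\textit{prog-t}(t)$ holds), then a suffix of $t$ faithfully simulates an infinite execution of $\textit{CM}'_{(A,B)}$ that visits $s_1$ infinitely often. The argument is driven by the phase structure of the two methods: in the guess phase each of $M_1,M_2$ returns after simulating one transition, and as soon as $\textit{failSimu}$ is set every later call returns trivially; hence the \emph{only} way a call can fail to return is a check-phase loop in which the simulation never fails, and such a loop visits $s_1$ infinitely often while the delimiter $\sharp$ together with the $\textit{readOne}/\textit{readRule}$ checks guarantees that each simulated step is a legal (possibly lossy) transition of $\textit{CM}'_{(A,B)}$. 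Composing with the Lemma gives $\neg\textit{prog-t}(t)\Rightarrow$ CPCP has a solution; and since $\textit{prog-t}(t)\Rightarrow\textit{prog-s}(t)$ in general, the same implication holds from $\neg\textit{prog-s}(t)$.

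For the converse I would take a CPCP solution, obtain from the Lemma an accepting execution of $\textit{CM}'_{(A,B)}$, and build one explicit library execution $t^\ast$ in which $M_1$ on $P_1$ and $M_2$ on $P_2$ collaborate in strict alternation, transferring the channel content through $x_1,y_1,x_2,y_2$, so as to reproduce that run and, once the check phase is entered, never return. Because advancing the simulation forces the two processes to act alternately, both are scheduled infinitely often, so $\textit{fair}(t^\ast)$ holds; since $t^\ast$ also carries a pending non-returning call, $\textit{sched}(t^\ast)$ follows, while $\textit{prog-s}(t^\ast)$ (and \emph{a fortiori} $\textit{prog-t}(t^\ast)$) fails. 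This one execution therefore simultaneously falsifies $\textit{sched}\Rightarrow\textit{prog-s}$, $\textit{sched}\Rightarrow\textit{prog-t}$, $\textit{fair}\Rightarrow\textit{prog-s}$ and $\textit{fair}\Rightarrow\textit{prog-t}$, i.e.\ violates all four properties. Conversely, any violation of lock- or deadlock-freedom supplies an infinite $t$ with $\neg\textit{prog-s}(t)$, and any violation of wait- or starvation-freedom supplies an infinite $t$ with $\neg\textit{prog-t}(t)$; in both cases the correspondence above forces an accepting run of $\textit{CM}'_{(A,B)}$ and hence a CPCP solution, closing the equivalence for all four properties.

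The step I expect to be the main obstacle is the forward half of the correspondence: proving rigorously that a never-returning check-phase loop really does realize a legal run of $\textit{CM}'_{(A,B)}$, and that no execution can loop forever without returning \emph{unless} it is such a faithful simulation. This rests on two delicate points of the construction that must be argued carefully---the ``fix methods to processes'' property, namely that any process assignment other than $M_1$ on $P_1$ and $M_2$ on $P_2$ forces a $\textit{readOne}/\textit{readRule}$ mismatch (by the FIFO discipline of the store buffers) and thus sets $\textit{failSimu}$ and returns; and the treatment of the delimiters $\bot_s,\bot_e$, where a lost delimiter update must correspond exactly to channel lossiness and never to a spurious non-terminating loop, which is precisely why the methods exhaust the buffered items through the four transfer points instead of writing them straight back into $y_2$.
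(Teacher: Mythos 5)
Your proposal is correct and follows essentially the same route as the paper: the same single library $\mathcal{L}(A,B)$ on two processes, the same exhaustive classification of infinite executions (simulation failure with $\textit{failSimu}$ set, perpetual guess phase, or a non-returning check-phase loop visiting $s_1$ infinitely often), the same appeal to Lemma~\ref{lemma:reducing CPCP into a infinite execution problem of lossy channel machine M'{(A,B)}}, and the same observation that accepting simulations are necessarily fair, which covers deadlock- and starvation-freedom. The two ``delicate points'' you flag (fixing methods to processes, and exhausting buffered items so lost delimiters cannot cause spurious non-terminating loops) are exactly the issues the paper's construction is designed to handle, so your sketch matches the paper's proof in both structure and substance.
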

\begin {proof}(Sketch)
  For each infinite execution $t$ of $\llbracket \mathcal{L}(A,B),2 \rrbracket$, assume that it simulates an execution of $\textit{CM}'_{(A,B)}$, or it intends to do so. There are three possible cases for $t$ shown as follows:

  \begin{itemize}
  \item[-] Case $1$: The simulation fails because some $\textit{readOne}$ does not read
    the intended value.

  \item[-] Case $2$: The simulation procedure succeeds, and $t$ infinitely loops in the guess phase.

  \item[-] Case $3$: The simulation procedure succeeds, and $t$ infinitely loops in the check phase and visits $s_1$ infinitely many times.
  \end{itemize}

  In case $1$, since $\textit{failSimu}$ is set to $\textit{true}$, each method
  returns immediately. Therefore, $t$ satisfies wait-freedom and thus, satisfies
  lock-freedom. $t$ can be either fair or unfair. In case $2$, since each method
  returns after finite number of steps in the guess phase, $t$ satisfies wait-freedom and thus, satisfies lock-freedom. 
  In case $3$, since $M_1$ and $M_2$ do not return in the check phase, $t$ violates lock-freedom and thus, violates wait-freedom. Since $M_1$ and $M_2$ coordinate when simulating each transition of $\textit{CM}'_{(A,B)}$, in case $2$ and $3$, $t$ must be fair.

  Therefore, we reduce the problem of checking whether $\textit{CM}'_{(A,B)}$ has an execution that visits $s_1$ infinitely often into the problem of checking whether $\mathcal{L}(A,B)$ has an infinite execution of case $3$ (which is fair and violates both wait-freedom and lock-freedom). By Lemma \ref{lemma:reducing CPCP into a infinite execution problem of lossy channel machine M'{(A,B)}}, we can see that the problems of checking lock-freedom, wait-freedom, deadlock-freedom and starvation-freedom of a given library for bounded number of processes are undecidable.
\end {proof}

We remark here that~\cite{DBLP:journals/corr/abs-2012-01067} considers
imposing liveness condition on store buffers, and requires buffered items to be
eventually flushed. Our undecidability results on liveness properties on TSO
still hold when imposing such liveness condition on store buffers, since in case
$3$ of the proof of Theorem \ref{theorem:lock-freedom is undecidable}, each item
put into buffer will eventually be flushed.

\section{Checking Obstruction-Freedom}
\label{sec:checking obstruction-freedom}


\forget{
In this section, we prove that obstruction-freedom is decidable.
We first introduce the basic TSO concurrent systems (the TSO concurrent systems of \cite{DBLP:conf/popl/AtigBBM10}) and the state reachability problem.
Then, we propose a notion called blocking pairs, and reduce checking obstruction-freedom on TSO, which considers infinite executions, into the state reachability problem of basic TSO concurrent systems, which is known to be decidable.
}

\forget{
In this section, we prove that obstruction-freedom is decidable.
We first introduce the basic TSO concurrent systems (the TSO concurrent systems of \cite{DBLP:conf/popl/AtigBBM10}) and the state reachability problem. Then, we propose a notion called blocking pairs, and reduce checking obstruction-freedom on TSO, which considers infinite executions, into checking if some
finite execution reaches a configuration containing a blocking pair.
Finally, we reduce checking blocking pairs into the state reachability problem of
basic TSO concurrent systems, which is known to be decidable.
}

\forget{
In this section, we prove that obstruction-freedom is decidable. We first
introduce the notion of $(S,k)$-(lossy) channel machines
\cite{DBLP:conf/popl/AtigBBM10}.
{Then, we propose a notion called blocking pairs, and slightly extend the TSO semantics to detect blocking pairs. We reduce checking obstruction-freedom on TSO,
which considers infinite executions, into checking if some
finite execution reaches a configuration containing a blocking pair on the extended TSO semantics.}
The latter problem is further reduced into a control state reachability problem of lossy
channel machines, which is known to be decidable.
{We slightly extend the lossy channel machine of \cite{DBLP:conf/popl/AtigBBM10} to detect blocking pairs.}
}

\subsection{The Basic TSO Concurrent Systems}
\label{subsec:the TSO concurrent system of DBLP:conf/popl/AtigBBM10}

Atig \emph{et al.} \cite{DBLP:conf/popl/AtigBBM10} considers the following concurrent systems on TSO: Each process runs a finite control state program that can do internal, 
read, write and $\textit{cas}$ actions, and different processes communicate via shared memory.
We use \emph{basic TSO concurrent systems} to denote such concurrent systems.

Formally, let $\Sigma(proc,\mathcal{D},\mathcal{X})$ be the set containing the $\tau$
(internal) action, 
the read actions, 
the write actions 
and the $\textit{cas}$ actions 
over memory locations $\mathcal{X}$ with data domain $\mathcal{D}$ and of process $proc$. A basic TSO concurrent system 
is a tuple $(P_1,\ldots,P_n)$, where each $P_i$ is a tuple $(Q_i,\Delta_i)$,
such that $Q_i$ is a finite control state set and $\Delta_i \subseteq Q_i \times
\Sigma(i,\mathcal{D},\mathcal{X}) \times Q_i$ is the transition relation. They
define an operational semantics similar to the one presented in Section
\ref{sec:concurrent systems}. Each configuration is also a tuple $(p,d,u)$,
where $p$ stores control state of each process, $d$ is a memory valuation and
$u$ stores buffer content of each process. We refer the reader to
\cite{DBLP:conf/popl/AtigBBM10} for a detailed description of the operational
semantics on TSO which is unsurprising, and hence omitted here.

Given a basic TSO concurrent system $(P_1,\ldots,P_n)$, two functions $p$ and
$p'$ that store control states of each process and two memory valuations $d$ and
$d'$, the state reachability problem requires to determine whether there is a
path from $(p,d,u_{\textit{init}})$ to $(p',d',u_{\textit{init}})$ in the
operational semantics, where $u_{\textit{init}}$ initializes each process with an empty buffer. Atig \emph{et al.} \cite{DBLP:conf/popl/AtigBBM10} prove
that the state reachability problem is decidable. 

\forget{
\subsection{$(S,k)$-(Lossy) Channel Machines}
\label{subsec:(S,k)-(lossy) channel machines}

In this subsection we introduce the $(S,k)$-(lossy) channel machines of \cite{DBLP:conf/popl/AtigBBM10}, which extend classical (lossy) channel machines of Section \ref{sec:undecidability of lock-freedom and wait-freedom} in the following ways:

\begin{itemize}
\item[-] Each transition is guarded by a condition about whether the content of
  a channel belongs to a regular language.

\item[-] A substitution of the content of a channel may be performed before send operations.

\item[-] We introduce a set of symbols, called ``strong symbols'', which are not
  allowed to be lost. We ensure that the number of strong symbols in a channel
  is always bounded.
\end{itemize}

For a  given channel $c \in \mathcal{CH}$, a regular guard on channel $c$ is a constraint of the form $c \in L$, where
$L \subseteq \Sigma_{\mathcal{CH}}^*$ is a regular set of sequences. For a sequence $u \in \Sigma_{\mathcal{CH}}^*$ and a guard $c \in L$, we write $u \models c \in L$ if $u \in L$.
For notational convenience, we write $a \in c$ instead of $c \in \Sigma_{\mathcal{CH}}^* \cdot a \cdot \Sigma_{\mathcal{CH}}^*$,
$c = \epsilon$ instead of $c \in \{ \epsilon \}$ and $c:\Sigma'$ instead of $c \in \Sigma'^*$ for any subset $\Sigma'$ of $\Sigma_{\mathcal{CH}}$.
A regular guard over $\mathcal{CH}$ associates a regular guard for each channel of $\mathcal{CH}$.
Let $\textit{Guard}(\mathcal{CH})$ be the set of regular guards over $\mathcal{CH}$. The definition of
$\models$ can be extended as follows: for $g \in \textit{Guard}(\mathcal{CH})$ and $u \in \mathcal{CH} \rightarrow \Sigma_{\mathcal{CH}}^*$,
we write $ u \models g$, if $u(c) \models g(c)$ for each $c \in \mathcal{CH}$.

We extend channel operations and introduce substitutions in send operations.
A send operation is of the form $c[\sigma]!a$, where $\sigma$ is a substitution over $\Sigma_{\mathcal{CH}}$.
We write $c ! a$ instead of $c [\sigma] ! a$ when $\sigma$ is the identity substitution.
For every $u \in \Sigma_{\mathcal{CH}}^*$, we have 
$\llbracket c[\sigma]!a \rrbracket(u,a \cdot u(c)[\sigma])$. The cases $\llbracket c?a \rrbracket(u,u')$ and $\llbracket nop
  \rrbracket(u,u')$ remain unchanged from the previous lossy channel machines.

The notion of channel machine is now extended with guards and substitutions. A channel machine is formally defined as a tuple $\textit{CM} = (Q,\mathcal{CH},\Sigma_{\mathcal{CH}},\Lambda,\Delta)$, where (1) $Q$ is a finite set of states, (2) $\mathcal{CH}$ is a finite set of channel names, (3) $\Sigma_{\mathcal{CH}}$ is a finite alphabet for channel contents, (4) $\Lambda$ is a finite set of transition labels, and (5) $\Delta \subseteq Q \times (\Lambda\cup\{\epsilon\}) \times \textit{Guard}(\mathcal{CH}) \times \textit{Op}(\mathcal{CH}) \times Q$ is a finite set of transitions.

Let $S \subseteq \Sigma$ be a finite set of ``strong symbols'' that must be kept
in each transition (that is, cannot be lost), and $k$ be a positive integer
bounding the numbers of strong symbols in a channel.
For sequences $u,v \in \Sigma_{\mathcal{CH}}^*$, $u \preceq_S^k v$ holds if $u$ is a subword of $v$, $u \uparrow_{S} = v \uparrow_{S}$ and $\vert u \uparrow{S} \vert \leq k$.
This relation can be extended as follows: For every $u,v \in \mathcal{CH} \rightarrow \Sigma_{\mathcal{CH}}^*$, $u \preceq_S^k v$ holds,
if $u(c) \preceq_S^k v(c)$ holds for each $c \in \mathcal{CH}$.

A $\textit{(S,k)-(perfect) channel machine}$ (abbreviated as $\textit{(S,k)-CM}$) is a channel machine $\textit{CM} = (Q,\mathcal{CH},\Sigma_{\mathcal{CH}},\Lambda,\Delta)$ with the strong symbol restriction $(S,k)$. Its semantics is defined as an LTS $(\textit{Conf}_{\textit{CM}}, \Lambda,\rightarrow_{\textit{CM}},\textit{initConf}_{\textit{CM}})$.
A configuration of $\textit{Conf}_{\textit{CM}}$ is a pair $(q,u)$ where $q \in Q$, $u:\mathcal{CH} \rightarrow \Sigma_{\mathcal{CH}}^*$, and it satisfies the strong symbol restriction $(S,k)$, i.e., for each $c$, $\vert u(c) \uparrow{S} \vert \leq k$.
The transition relation $\rightarrow_{\textit{CM}}$ is defined as follows: given $q,q' \in Q$ and $u,u' \in \mathcal{CH} \rightarrow \Sigma_{\mathcal{CH}}^*$,
$(q,u) \overset{\alpha}{\longrightarrow}_{\textit{CM}} (q',u')$,
if there exists $g$ and $op$, such that $(q,\alpha,g,\textit{op},q') \in \Delta$, $u \models g$ and $\llbracket \textit{op} \rrbracket (u,u')$.
Similarly, a $\textit{(S,k)-lossy channel machine}$ (abbreviated as $\textit{(S,k)-LCM}$) is a channel machine $\textit{CM}$ with lossy channels and the strong symbol restriction $(S,k)$.
Its semantics is defined as an LTS $(\textit{Conf}_{\textit{CM}}, \Lambda,\rightarrow_{\textit{(\textit{CM},S,k)}},\textit{initConf}_{\textit{CM}})$.
The transition relation $\rightarrow_{\textit{(\textit{CM},S,k)}}$ is defined as follows:
$(q,u) \overset{\alpha}{\longrightarrow}_{\textit{(\textit{CM},S,k)}} (q',u')$,
if there exists $v,v' \in \mathcal{CH} \rightarrow \Sigma_{\mathcal{CH}}^*$,
such that $v \preceq_S^k u$, $(q,v) \overset{\alpha}{\longrightarrow}_{\textit{CM}} (q',v')$ and $u' \preceq_S^k v'$.

Given a channel machine $\textit{CM}$, we say that $(q_0,u_0) \cdot \alpha_1 \cdot (q_1,u_1) \cdot \ldots \cdot \alpha_w \cdot (q_w,u_w)$ is a finite run of $\textit{CM}$ from $(q,u)$ to $(q',u')$,
if (1) $(q_0,u_0)=(q,u)$, (2) $(q_i,u_i) \overset{ \alpha_{\textit{i+1}} }{\longrightarrow}_{\textit{CM}} (q_{\textit{i+1}},u_{\textit{i+1}})$
for each $i$ and (3) $(q_w,u_w) = (q',u')$.
We say that $l$ is a trace of a finite run $\rho$ if $l = \rho \uparrow_{\Lambda}$.
Given $q,q' \in Q$, let $\textit{T}_{q,q'}^{S,k}(\textit{CM})$ denote the set of traces of all finite runs of a $(S,k)$-$\textit{CM}$ $\textit{CM}$ from the configuration $(q,c_{\textit{init}})$ to the configuration $(q',c_{\textit{init}})$. Here $c_{\textit{init}}$ is a function that maps each channel name to an empty channel $\epsilon$.
For $(S,k)-\textit{LCM}$ $\textit{CM}$, the notations of finite run and its trace are defined as in the non-lossy case by replacing $\rightarrow_{\textit{CM}}$ with $\rightarrow_{(\textit{CM},S,k)}$.
Let $\textit{LT}_{q,q'}^{S,k}(\textit{CM})$ denote the set of traces of all finite runs of $(S,K)\textit{-LCM}$ $\textit{CM}$ from the configuration $(q,c_{\textit{init}})$ to the configuration $(q',c_{\textit{init}})$.

For channel machines $\textit{CM}_1 =
(Q_1,\mathcal{CH}_1,\Sigma_{\mathcal{CH}},\Lambda,\Delta_1 )$ and $\textit{CM}_2 =
(Q_2,\mathcal{CH}_2,\Sigma_{\mathcal{CH}}$, $\Lambda,$ $\Delta_2)$ such that
$\mathcal{CH}_1 \cap \mathcal{CH}_2 = \emptyset$, the product of $\textit{CM}_1$ and $\textit{CM}_2$
is 
a channel machine $\textit{CM}_1 \otimes \textit{CM}_2 = (Q_1 \times Q_2,\mathcal{CH}_1 \cup
\mathcal{CH}_2,\Sigma_{\mathcal{CH}},\Lambda,\Delta_{12} )$, where $\Delta_{12}$
is defined by synchronizing transitions sharing the same label in $\Lambda$
under the conjunction of their guards, and letting other transitions be
asynchronous. The following lemma holds as in \cite{DBLP:conf/popl/AtigBBM10}.

\begin{lemma}
\label{proposition:relation bewteen LT of M1 and M2 and (LT of M1 and LT of M2)}
Given channel machines $\textit{CM}_1 = (Q_1,\mathcal{CH}_1,\Sigma_{\mathcal{CH}},\Lambda,\Delta_1 )$ and $\textit{CM}_2 = (Q_2,$ $\mathcal{CH}_2,\Sigma_{\mathcal{CH}},\Lambda,\Delta_2 )$, let $q_1,q'_1 \in Q_1$, $q_2,q'_2 \in Q_2$, $q=(q_1,q_2)$, $q'=(q'_1,q'_2)$, then $\textit{LT}_{q,q'}^{S,K}(\textit{CM}_1 \otimes \textit{CM}_2)$ = $\textit{LT}_{q_1,q_1'}^{S,K}(\textit{CM}_1) \cap \textit{LT}_{q_2,q'_2}^{S,K}(\textit{CM}_2)$ and $T_{q,q'}^{S,K}(\textit{CM}_1 \otimes \textit{CM}_2)$ = $T_{q_1,q_1'}^{S,K}(\textit{CM}_1) \cap T_{q_2,q'_2}^{S,K}(\textit{CM}_2)$.
\end{lemma}

Given a $(S,k)\textit{-LCM}$ $\textit{CM}$ and two states $q,q' \in Q$, a control state reachability problem of $\textit{CM}$ is to determine whether $\textit{LT}_{q,q'}^{S,k}(\textit{CM}) \neq \emptyset$. Atig \emph{et al.} \cite{DBLP:conf/popl/AtigBBM10} prove that the control state reachability problem is decidable for $(S,k)\textit{-LCM}$.

}

\subsection{Verification of Obstruction-Freedom}
\label{subsec:equivlant characterization of obstruction-freedom}
\forget{
\subsection{Equivalent Characterization of Obstruction-Freedom}
\label{subsec:equivlant characterization of obstruction-freedom}
}

\forget{
The definition of obstruction-freedom requires checking infinite executions. We propose an equivalent characterization of obstruction-freedom, which checks finite executions instead of infinite executions. This equivalent characterization is based on a notion called blocking pairs, which captures potential obstruction-freedom violations.
}

\forget{
We propose an equivalent characterization of obstruction-freedom, which
checks finite executions instead of infinite executions, which are required in the definition of obstruction-freedom.
To reduce obstruction-freedom checking to control state reachability checking
for lossy channel machines, we slightly extend the TSO semantics and propose
a second equivalent characterization, also as a finite execution problem. Both
of our equivalent characterizations are based on a notion called blocking pairs,
which capture potential obstruction-freedom violation, and both of them require
detecting blocking pairs, which in turn requires reading a memory valuation and
the buffer contents atomically. This can not be done in the original TSO memory
model.}

The definition of obstruction-freedom requires checking infinite executions,
while the state reachability problem considers finite executions reaching
specific configurations. To bridge this gap, we propose a notion called
blocking pairs, which is defined on concurrent systems on the SC memory model
and captures potential obstruction-freedom violations.
Let $\llbracket \mathcal{L}, n \rrbracket_{sc}$ be the operational semantics of a concurrent system that runs on the SC memory model and contains $n$ processes.
The configurations of $\llbracket \mathcal{L}, n \rrbracket_{sc}$ coincide with the configurations of $\llbracket \mathcal{L}, n \rrbracket$ that preserve the buffer empty for each process.
When performing a write action $\llbracket \mathcal{L}, n \rrbracket_{sc}$ does
not put the item into the buffer, but directly updates the memory instead. $\llbracket \mathcal{L}, n \rrbracket_{sc}$ does not have flush actions, while other actions are unchanged from $\llbracket \mathcal{L}, n \rrbracket$. Since we use finite program positions, finite memory locations, a finite data domain, finite method names and a finite number of processes, and since we essentially do not use buffers, we observe that $\llbracket \mathcal{L}, n \rrbracket_{sc}$ is a finite state LTS.

\forget{
The notion blocking pair is defined on the operational semantics of the SC
memory model. Let $\llbracket \mathcal{L}, n \rrbracket_{sc}$ be the operational
semantics of a concurrent system that runs on the SC memory model and contains
$n$ processes. The configurations of $\llbracket \mathcal{L}, n \rrbracket_{sc}$
coincide with the configurations of $\llbracket \mathcal{L}, n \rrbracket$ that
preserve the buffer empty for each process. The transition relation of
$\llbracket \mathcal{L}, n \rrbracket_{sc}$ is generated from $\llbracket
\mathcal{L}, n \rrbracket$ by modifying the transition rule of the write action
as shown below, and by removing the transition rule for flush actions, while keeping the transition rules of all other actions unchanged:
\[
  \begin{array}{l c}
    \bigfrac{ p(i)=(q_i,q'_c) \quad q_i\
    {\xrightarrow{\textit{write}(x,a)}}_{\mathcal{L}}\ q'_i} { ( p,d,u)\ {\xrightarrow{\textit{write}(i,x,a)}}\
    ( p[i:(q'_i,q'_c)],d[x:a],u)}
  \end{array}
\]
Since we use finite program positions, finite memory locations, a finite data
domain, finite method names and a finite number of processes, and since we
  essentially do not use buffers, we observe that $\llbracket \mathcal{L}, n
\rrbracket_{sc}$ is a finite state LTS.
}

Let us now propose the notion of blocking pairs. Given a state $q \in \{\textit{in}_{\textit{clt}}\} \cup (Q_{\mathcal{L}} \times \{\textit{in}_{\textit{lib}}\})$ (recall that $Q_{\mathcal{L}}$ is the set of program positions of library, $\textit{in}_{\textit{clt}}$ and $\textit{in}_{\textit{lib}}$ are the states of the most general client) and a memory valuation $d$,
$(q,d)$ is a blocking pair, if in $\llbracket \mathcal{L}, 1 \rrbracket_{sc}$
there exists a configuration $(p,d,u)$, such that the state of process 1 of $p$ is $q$ ($p(1)=q$), 
and there exists an infinite execution from $(p,d,u)$ and such
execution does not have a return action. This property can be expressed by
the 
CTL$^*$ formula $E ( (G\ \neg P_{\textit{ret}}) \wedge (G\ X\ P_{\textit{any}})
)$, where $E$ is the usual modality of CTL$^*$,
$P_{\textit{ret}}$ is a predicate that checks if the transition label is a
return action, 
and $P_{\textit{any}}$ returns
true 
{for any transition label.}
The following lemma reduces checking obstruction-freedom into the state reachability problem.

\forget{
{\color {blue}Given a library $\mathcal{L}$, we generate a library
  $\mathcal{L}_{in}$ as follows: Intuitively, $\mathcal{L}_{in}$ extends
  $\mathcal{L}$ by nondeterministically doing a $\textit{cas}(x,0,0)$ action
  between any two actions of $\mathcal{L}$. Formally, if $q_1
  {\xrightarrow{\alpha}} q_2$ is a transition of $\mathcal{L}$, then
  $\mathcal{L}_{in}$ adds state $q_{(1,2)}$ and transitions $q_1
  {\xrightarrow{\textit{cas}(x,0,0)}} q_{(1,2)}$ and $q_{(1,2)}
  {\xrightarrow{\alpha}} q_2$. We call the newly added states (e.g.,
  $q_{(1,2)}$) intermediate states. We can see that $\mathcal{L}_{in}$
  essentially has the ``same behavior'' as $\mathcal{L}$, since
  $\textit{cas}(x,0,0)$ does not change the memory valuation. The
  $\textit{cas}(x,0,0)$ transition of $\mathcal{L}_{in}$ can be understand as
  $\mathcal{L}$ doing several flush actions.}
  }

\forget{
  The following lemma provides an equivalent characterization of
  obstruction-freedom as a reachability problem of configurations that contain
  blocking pairs. Refer to Appendix \ref{subsec:appendix proof of lemma lemma:equivalent characterization of obstruction-freedom} 
  for the proof.
}

\forget{
The following lemma provides an equivalent characterization of obstruction-freedom as a reachability problem of configurations that contain a blocking pair. Given a configuration $(p,d,u)$ and a process $proc$, let $d[u(proc)]$ be a memory valuation obtained from $d$ by using the buffered writes of $u(proc)$ to update $d$ one by one.
}

\begin{lemma}
\label{lemma:equivalent characterization of obstruction-freedom}
Given a library $\mathcal{L}$, there exists an infinite execution $t$ of $\llbracket \mathcal{L}, n \rrbracket$ that violates obstruction-freedom on TSO, if and only if there exists an finite execution $t'$ of $\llbracket \mathcal{L},n \rrbracket$ and a process $proc$, such that $t'$ leads to a
configuration $(p,d,u_{\textit{init}})$, where $(p(proc),d)$ is a blocking pair.
\end{lemma}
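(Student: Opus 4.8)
The plan is to prove both directions by exploiting the fact that, for a \emph{single} process, the TSO and SC semantics coincide up to flush steps. For a single-process configuration $(q,d,u)$ let $\widehat{d}$ denote the memory obtained from $d$ by applying, in FIFO order, all writes buffered in $u$; this is the value the process actually observes through $\textit{lookup}$. A routine check shows $\widehat{d}$ is invariant under flush (flushing the oldest buffered write either leaves $\widehat{d}$ unchanged, because a newer write to the same location survives, or sets $d$ to the flushed value, which already equalled $\widehat{d}$), whereas read, write and $\textit{cas}$ update $(q,\widehat{d})$ exactly as in SC (for $\textit{cas}$ the buffer is empty, so $\widehat{d}=d$). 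Hence a single-process TSO run from $(q,d,u)$ and the SC run obtained by deleting its flush steps visit the same sequence of $(\text{control state},\widehat{d})$ pairs and perform the same call and return actions; since an infinite TSO run necessarily contains infinitely many non-flush steps, infiniteness and absence of returns are preserved in both directions. I will use this correspondence freely.

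For the ($\Rightarrow$) direction, let $t$ be an infinite execution violating obstruction-freedom, so $\textit{sched}(t)\wedge\textit{iso}(t)\wedge\neg\textit{prog-s}(t)$. From $\textit{iso}(t)$ I fix an index beyond which only one process $proc$ moves; combining this with $\neg\textit{prog-s}(t)$ I choose an index $i$ so that $t(1,i)$ already contains a call of $proc$ with no later matching return, and beyond $i$ only $proc$ acts and no return occurs. Let $(\pi,\delta,\upsilon)$ be the configuration reached after $t(1,i)$. Since the other processes are idle afterwards and no process reads another's buffer, the suffix is a single-process TSO run of $proc$ from $(\pi(proc),\delta,\upsilon(proc))$ that is infinite and return-free; by the correspondence above, $(\pi(proc),\widehat{\delta})$, with $\widehat{\delta}$ being $\delta$ updated by the writes in $\upsilon(proc)$, is a blocking pair. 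It then remains to reach a configuration with all buffers empty and memory exactly $\widehat{\delta}$. For each $j\neq proc$ let $w_j$ be the earliest-issued write of $j$ still present in $\upsilon(j)$; since $w_j$ is still buffered, by FIFO no write of $j$ issued at or after $w_j$ was flushed, and $j$ performed no $\textit{cas}$ after issuing $w_j$ (as $\textit{cas}$ needs an empty buffer), so every action of $j$ from $w_j$ onwards (pending writes, reads, $\tau$'s) leaves the memory and every other process's behaviour untouched. Deleting these actions from $t(1,i)$ yields, by a standard commutation argument, a valid finite execution reaching $(\pi',\delta,\upsilon')$ with $\pi'(proc)=\pi(proc)$, $\upsilon'(proc)=\upsilon(proc)$ and $\upsilon'(j)=\epsilon$ for $j\neq proc$. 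Finally I append flush steps emptying $proc$'s buffer, reaching $(\pi',\widehat{\delta},u_{\textit{init}})$; this is the required $t'$, since $(\pi'(proc),\widehat{\delta})=(\pi(proc),\widehat{\delta})$ is a blocking pair and all buffers are empty.

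For the ($\Leftarrow$) direction, suppose $t'$ reaches $(p,d,u_{\textit{init}})$ with $(p(proc),d)$ a blocking pair. By definition there is an infinite, return-free SC run of a single process from control state $p(proc)$ with memory $d$. Lifting it to TSO (replace each SC write by a write immediately followed by its flush, so the buffer is empty between consecutive SC steps; reads and $\textit{cas}$ transfer directly) gives an infinite return-free single-process TSO run of $proc$ starting from an empty buffer and memory $d$. Scheduling this run for $proc$ alone after $t'$, with all other processes idle, yields an infinite execution $t=t'\cdot(\cdots)$ of $\llbracket\mathcal{L},n\rrbracket$. Then $\textit{iso}(t)$ holds, the run contains a pending call of $proc$ (inherited from $t'$ when $p(proc)$ is a library state, or performed in the suffix, which must occur for the run to be infinite) with no subsequent return, giving $\neg\textit{prog-s}(t)$, and together with $proc$ being scheduled infinitely often this gives $\textit{sched}(t)$; hence $t$ violates obstruction-freedom.

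The main obstacle is the buffer surgery in the ($\Rightarrow$) direction: a violating execution need not ever reach an all-empty-buffer configuration, and I must reach one whose memory equals the \emph{effective} memory $\widehat{\delta}$ of $proc$, which is neither $\delta$ nor the result of flushing every buffer. The two opposite manipulations, discarding the other processes' pending writes so they do not perturb memory while forcing $proc$'s own pending writes into memory, are exactly what make the target memory coincide with the blocking pair's valuation, and the delicate point is to verify that discarding those actions preserves the validity of the execution.
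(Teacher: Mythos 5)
Your proof follows the same route as the paper's: the single-process TSO/SC correspondence, appending the SC witness of the blocking pair for the \emph{if} direction, and, for the \emph{only if} direction, pruning the other processes' trailing actions and then draining $proc$'s buffer so that the reached configuration has empty buffers and memory equal to $proc$'s effective valuation $\widehat{\delta}$. The \emph{if} direction and the correspondence argument are complete and correct.

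The one genuine slip is in the deletion step of the ($\Rightarrow$) direction. You assert that every action of a process $j\neq proc$ occurring at or after $w_j$ is a pending write, a read, or a $\tau$, and you delete all of them. This enumeration omits \emph{flush} actions of $j$: a write of $j$ issued \emph{before} $w_j$ may perfectly well be flushed at a trace position \emph{after} $w_j$ is issued (FIFO only forces flushes to respect issue order, not to precede later issues). Such flushes do modify the shared memory, and $proc$ or another process may read the flushed value later in $t(1,i)$; deleting them therefore does not ``leave the memory untouched'' and can invalidate the rest of the execution. Moreover, your conclusion $\upsilon'(j)=\epsilon$ actually \emph{needs} those flushes to be retained --- they are precisely what drains $j$'s buffer of its pre-$w_j$ writes. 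The repair is exactly what the paper's proof does: delete only the \emph{non-flush} actions of $j$ from $w_j$ onwards (the never-flushed writes and the intervening reads and $\tau$'s), keep every flush action of $j$, and check that each kept flush still removes the same head item of $j$'s buffer, so the memory evolution --- and hence every other process's behaviour, including the suffix of $proc$ --- is unchanged. With that correction the remainder of your argument (appending flushes of $proc$'s buffer to reach $(\pi',\widehat{\delta},u_{\textit{init}})$ and reading off the blocking pair from the SC projection of the isolated suffix) goes through.
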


\forget{
The detailed proof can be found in Appendix \ref{subsec:appendix proof of
  lemma lemma:equivalent characterization of obstruction-freedom}. The
\emph{if} direction holds since we can obtain an obstruction-freedom violation
$t' \cdot t_1$, in which $t_1$ is the specific infinite execution in the
definition of blocking pairs. For the \emph{only if} direction, there exists
$t_1$, $t_2$ and process $proc$, such that $t=t_1 \cdot t_2$, and $t_2$
contains only actions of process $proc$. Let $\alpha_i$ be the last write
action of process $i \neq proc$ in $t_1$ that has not been flushed. Dropping
$\alpha_i$ in $t$ yields a legal execution, since $\alpha_i$ does not
influence the memory. Also, it is legal to clear buffer of process $proc$
before executing $t_2$. With the approach above, we can generate an new
execution that reaches a configuration, which has an empty buffer for each
process and contains a blocking pair, before the execution of $t_2$.
}

\begin {proof}(Sketch)
To prove the \emph{if} direction, consider the infinite executions which first act as $t'$, and then always run process $proc$ and disallowing other processes to do actions. The behaviors after $t'$ of these executions behave as the executions of $\llbracket \mathcal{L},1 \rrbracket_{sc}$ from the configuration  $(p',d,u_{\textit{init}})$, where $p'$ is a function that maps process $1$ to $p(proc)$. According to the definition of blocking pairs, there exists one such execution $t$ that violates obstruction-freedom.

The \emph{only if} direction is proved as follows: There exists $t_1$, $t_2$ and process $proc$, such that $t=t_1 \cdot t_2$, and $t_2$ contains only actions of process $proc$. Given process $i$, let $\alpha_i$ be the last write action of process $i$ in $t_1$ that has not been flushed. Let $t'_1$ be obtained from $t_1$ by removing the last non-flush action $\alpha$ of process $i$ after $\alpha_i$ for some process $i \neq proc$. $\alpha$ can not influence other process since it can not influence memory. Since the only possible actions of process $i$ after $\alpha$ is flush, the subsequent actions of process $i$ is not influenced. Therefore, $t'_1 \cdot t_2$ is an execution of $\llbracket \mathcal{L}, n \rrbracket$. By repeatedly apply above approach we obtain $t_3$ from $t_1$, such that $t_3 \cdot t_2$ violates obstruction-freedom, and for each process $i \neq proc$, all write of process $i$ has been flushed in $t_3$. Since process $proc$ runs as on SC in $t_2$ and read actions first try to read from buffer, we can see that $t_3 \cdot t_4 \cdot t_2$ is an execution of $\llbracket \mathcal{L}, n \rrbracket$, where $t_4$ flushes all the remaining items of process $proc$ in $t_1$. The configuration reached by execution $t_3 \cdot t_4$ has buffer empty for each process, and we can see that the control state and memory valuation is a blocking pair according to its definition. This completes the proof of this lemma.
\end {proof}

\forget{
\begin{lemma}
\label{lemma:equivalent characterization of obstruction-freedom}
Given a library $\mathcal{L}$, there exists an infinite execution $t$ of $\llbracket \mathcal{L}, n \rrbracket$ that violates obstruction-freedom, if and only if there exists an finite execution $t'$ of $\llbracket
\mathcal{L},n \rrbracket$ and a process $proc$, such that $t'$ leads to a
configuration $(p,d,u)$, where $u(proc)=\epsilon$, and $(p(proc),d)$ is a blocking pair. 
\end{lemma}
}
\forget{
\begin {proof}
  Let us prove the \emph{only if} direction first. Since $t$ violates
  obstruction-freedom, there exists a process $proc$ and a time $i_1$, such that
  from $i_1$ onward only the process $proc$ can launch actions on $t$. Since
  there is a finite number of write actions of $t$ from its beginning to time
  $i_1$, there exists a time $i_2$, such that after $i_2$, only the process
  $proc$ can do flush actions. Let time $i_3$ be the point of the last call
  action of process $proc$, and let $i_4 = \textit{max}(i_1,i_2,i_3)$. It is
  easy to see that from time $i_4$, process $proc$ is not influenced by any
  other processes, and the behavior of process $proc$ from then on is as on SC.
  Let us generate another execution $t_1$ as follows: $t_1$ works
  as $t$ until time $i_4$, then $t_1$ clears the store buffer of process $proc$
  (let $(p_5,d_5,u_5)$ be the configuration at this time point), and works as
  the behavior of $t$ after time $i_4$. It is easy to see that $t_1$ is also
  an execution of $\llbracket \mathcal{L}, n \rrbracket$. Since process $proc$
  is scheduled infinitely many times and does not return, and
  $d_5(proc)=\epsilon$, we can see that $(p_5(proc),d_5)$ is a blocking pair.


\forget{Let us generate an execution $t'$ as follows: $t'$ first does $t$ transitions
    from the initial configuration to $(p_4,d_4,u_4)$.
    Then, $t'$ flushes all buffered items in process $proc$'s buffer and then the
    process $proc$ goes to a intermediate state. Assume that $t'$ reaches a
    configuration $(p,d,u)$, with $p(proc)$ some intermediate state $q_{(u,v)}$.
    Finally, $t'$ does the remaining transitions of $t$. Since $t'$ also
    violates obstruction-freedom, and $t$'s behavior is as in SC after time
    $t_4$, we can see that $(q_v,d)$ is a blocking pair.}
  To prove the \emph{if} direction, 
  we generate an execution $t$ by first doing $t'$ transitions from the initial configuration to 
  $(p,d,u)$ and then continuing to run process $proc$, and disallowing other processes to do actions (including flush actions). 
It is easy to see 
that such executions of $t$ from 
$(p,d,u)$ behaves as the executions of $\llbracket \mathcal{L},1 \rrbracket_{sc}$ from the configuration 
$(p',d,u_{\textit{init}})$, where $p'$ is a function that maps process $1$ to $p(proc)$.
Moreover, we require the execution of $t$ from 
$(p,d,u)$ to satisfy $(G\ \neg P_{\textit{ret}})
\wedge (G\ X\ P_{\textit{any}})$. This 
holds since 
$(p(proc),d)$ is a blocking
pair. Thus, we can see that $t$ has infinite length and does not contain
any return action after 
it reaches configuration $(p,d,u)$, which implies that $t$ violates
obstruction-freedom. \qed
\end {proof}
}

\forget{
\begin{lemma}
\label{lemma:equivalent characterization of obstruction-freedom}
Given a library $\mathcal{L}$, there exists an infinite execution $t$ of
$\llbracket \mathcal{L}, n \rrbracket$ that violates obstruction-freedom, if and
only if there exists an infinite execution $t'$ of $\llbracket
\mathcal{L},n \rrbracket$, a process $proc$ and an integer $i$, such that
the $i$-th transition of $t'$ leads to a configuration $(p,d,u)$, such
that 
$(p(proc),d[u(proc)])$ is a blocking pair. 
\end{lemma}
\begin {proof}
  Let us prove the only if direction. Since $t$ violates
    obstruction-freedom, there exists a process $proc$ and a time $t_1$, such that
    from $t_1$ only the process $proc$ can launch actions on $t$.
    Since there is a finite number of write actions of $t$ from its beginning to
    time $t_1$, there exists a time $t_2$, such that after $t_2$, only the
    process $proc$ can do flush actions.
    Let time $t_3$ be the point of the last call action of process
    $proc$, and let $t_4 = \textit{max}(t_1,t_2,t_3)$.
    Let $(p_4,d_4,u_4)$ be the configuration of time $t_4$.
    {It is easy to see that from time $t_4$, process $proc$ is not
      influenced by any other process, and the behavior of process $proc$ from then on is as on SC.}
    Therefore, in $\llbracket \mathcal{L},1 \rrbracket_{sc}$, from $(p_4(proc),d_4[u_4(proc)],u_{\textit{init}})$, we can do the remaining transitions of $t$ from time point $t_4$. We can see that $t'=t$, $i=t_4$, and $(p_4(proc),d_4[u_4(proc)])$ is a blocking pair.

    \forget{Let us generate an execution $t'$ as follows: $t'$ first does $t$ transitions
    from the initial configuration to $(p_4,d_4,u_4)$.
    Then, $t'$ flushes all buffered items in process $proc$'s buffer and then the
    process $proc$ goes to a intermediate state. Assume that $t'$ reaches a
    configuration $(p,d,u)$, with $p(proc)$ some intermediate state $q_{(u,v)}$.
    Finally, $t'$ does the remaining transitions of $t$. Since $t'$ also
    violates obstruction-freedom, and $t$'s behavior is as in SC after time
    $t_4$, we can see that $(q_v,d)$ is a blocking pair.}
To prove the if direction, given an execution $t'$, a process $proc$ and integer $i$ as above
exists. 
We generate an execution $t$ by first doing $t'$ transitions from the initial configuration to time $i$, and then continuing to run process $proc$, and disallowing other processes to do actions
or flush. 
It is easy to see 
that such executions of $t$ from time $i$ behave as the executions of $\llbracket \mathcal{L},1 \rrbracket_{sc}$ from the configuration $(p',d[u(proc)],u_{\textit{init}})$, where $p'$ is a function that maps process $1$ to $p(proc)$.
Moreover, we require the execution of $t$ from time $i$ to satisfy $G \neg P_{\textit{ret}}
\wedge G X P_{\textit{any}}$. This is feasible since $(p(proc),d[u(proc)])$ is a blocking
pair. Thus, we can see that $t$ has infinite length and does not contain
any return action after its $i$-th transition, which implies that $t$ violates
obstruction-freedom. \qed
\end {proof}
}

\forget{
\begin {proof}
To prove the only if direction, since $t$ violates obstruction-freedom, there exists process $pro$ and time point $t_1$, such that from time point $t_1$, only process $pro$ can launch active on $t$. Since the total number of write actions of non-$pro$ process is finite, there exists a time point $t_2$, such that after time point $t_2$, only process $pro$ can do flush operation and all non-$pro$ process can not do flush operation. Let $t_3 = \textit{max}(t_1,t_2)$. Then, after time point $t_3$, we can see that in the concurrent system, only one process $pro$ can run while all other process can not either do action nor do flush. It is easy to see that, in this situation, process $pro$ runs as in SC memory model. Let time point $t_4$ be the time point of the last call action of process $pro$, let $t_5 = \textit{max}(t_3,t_4)$. Let $(p_5,d_5,u_5)$ be the configuration of time point $t_5$. Since $t$ is a obstruction-freedom violation, we can see that from time point $t_5$, (1) only process $pro$ can run, and its behavior is as one process run in SC memory model from control state $p_5(pro)$ and memory valuation $d_5[u_5(pro)]$, (2) there is infinite number of actions but no return actions. Here $d_5[u_5(pro)]$ represents a memory valuation obtained from $d_5$ by using the buffered write of $u_5(pro)$ to update $d_5$ one by one. Therefore, we can see that $(p_5(pro), d_5[u_5(pro)])$ is a blocking pair. We could generate another execution $t'$ from $t$, and the behavior of $t'$ is as follows: $t'$ first do behavior of $t$ till time point $t_5$, then $t'$ flush all buffered write of process $pro$, and then $t'$ do behavior of $t$ after time point $t_5$. It is easy to see that $t'$ belongs to $\llbracket \mathcal{L}, n \rrbracket$, and it reaches a configuration $(p_5,d_5[u_5(pro)],u_5[pro:\epsilon])$ with $(p_5(pro), d_5[u_5(pro)])$ being a blocking pair.


To prove the if direction, assume that such process $pro$ and integer $i$ exists. Then, from time point $i$, we can generate another execution $t'$ by continuing run process $pro$ and do not let other processes to do action or do flush. Executions obtained by such manner will run as in SC memory model. Moreover, we require the execution $t'$ to satisfy $E (G \neg \textit{isRet} \wedge G X \textit{true})$. This is feasible since $(p(pro),d[u(pro)])$ is a blocking pair. Therefore, we can see that $t'$ has infinite length and does not contain any return action after its $i$-th transition, which implies that $t'$ violates obstruction-freedom. \qed
\end {proof}
}

\forget{
Given a library $\mathcal{L}$, let $\llbracket \mathcal{L}, n \rrbracket_{of}$ be an LTS that extends $\llbracket \mathcal{L}, n \rrbracket$ by remembering if a configuration that contains a blocking pair has already been reached. Each configuration of $\llbracket \mathcal{L}, n \rrbracket_{of}$ is a tuple $(p,d,u,bpflag)$ where $(p,d,u)$ is a configuration of $\llbracket \mathcal{L}, n \rrbracket$ and $bpflag \in \{ 0,1 \}$ records if the execution already reach a configuration that contains a blocking pair. The transition rule of $\llbracket \mathcal{L}, n \rrbracket_{of}$ is generated from that of $\llbracket \mathcal{L}, n \rrbracket$ as follows: Let $\rightarrow_{of}$ be the transition relation of $\llbracket \mathcal{L}, n \rrbracket_{of}$ and $\rightarrow$ be the transition relation of $\llbracket \mathcal{L}, n \rrbracket$: If $(p,d,u) {\xrightarrow{ \alpha }} (p',d',u)$, then we have $(p,d,u,bpflag) {\xrightarrow{ \alpha }}_{of}$ $(p',d',u',bpflag')$. $bpflag'$ is obtained as follows: if $bpflag$ is $0$ and  
$(p'(proc),d'[u'(proc)])$ is a blocking pair for some process $proc$, 
then $bpflag'$ is set to $1$; Otherwise, $bpflag'$ equals $bpflag$.
The initial configuration of $\llbracket \mathcal{L}, n \rrbracket_{of}$ is
$(p_{\textit{init}}, d_{\textit{init}}, u_{\textit{init}},0)$.
}



\forget{
Lemma \ref{lemma:equivalent characterization of obstruction-freedom} reduces checking obstruction-freedom into checking if any configuration contains a blocking pair.
Given a finite execution $t$ of $\llbracket \mathcal{L}, n \rrbracket_{of}$ from
the initial configuration to a configuration $(p,d,u,bpflag)$, some intermediate
configuration of $t$ contains a blocking pair,
if and only if $bpflag=1$. This implies the following lemma, which reduces
checking obstruction-freedom into the finite trace reachability problem for
$\llbracket \mathcal{L}, n \rrbracket_{of}$.
}

\forget{
We can see that
given an finite execution $t$ of $\llbracket \mathcal{L}, n
\rrbracket_{of}$ from the initial configuration to a configuration
$(p,d,u,$ $bpflag)$, some intermediate configuration of $t$ contains blocking pair,
if and only if $bpflag=1$. This implies the following lemma, which reduces
checking obstruction-freedom into the finite trace reachability problem for
$\llbracket \mathcal{L}, n \rrbracket_{of}$.
}

\forget{
\begin{lemma}
\label{lemma:reducing obstruction free to a reachability problem of ObsSem(L,n)}
Given a library $\mathcal{L}$, there exists an infinite execution of $\llbracket \mathcal{L}, n \rrbracket$ that violates obstruction-freedom, if and only if there is a finite trace of $\llbracket \mathcal{L}, n \rrbracket_{of}$ that reaches a configuration $(p,d,u_{\textit{init}},1)$ for some $p$ and $d$.
\end{lemma}
\begin {proof}
  By Lemma \ref{lemma:equivalent characterization of obstruction-freedom} and the construction of $\llbracket \mathcal{L},n \rrbracket_{of}$, we reduce obstruction-freedom into reachability of $(p,d',u,1)$ in $\llbracket \mathcal{L},n \rrbracket_{of}$. Since flushing items does not influence $bpflag$, we further reduce it into reachability of $(p,d,u_{\textit{init}},1)$ in $\llbracket \mathcal{L},n \rrbracket_{of}$. \qed
\end {proof}
}

\forget{
\subsection{Construction of \redt{$\textit{CM}_i$}}
\label{lemma:channel Machines Mi}

We now construct a channel machine \redt{$\textit{CM}_i$} to simulate process $i$ of $\llbracket \mathcal{L}, n \rrbracket_{of}$. 
Its construction is similar to the channel machines of Atig \emph{et al.} \cite{DBLP:conf/popl/AtigBBM10} and our previous work \cite{DBLP:conf/sofsem/WangLW16}.
Our work extends the channel machines of \cite{DBLP:conf/popl/AtigBBM10} and
\cite{DBLP:conf/sofsem/WangLW16} by allowing to read 
a memory valuation updated by process $i$'s buffer content atomically to detect blocking pairs.
In \redt{$\textit{CM}_i$}, call and return actions are treated as $\epsilon$ transitions.

Let $\textit{Val}$ be the set of memory valuations, and each memory valuation is a function that maps a memory location in $\mathcal{X}_{\mathcal{L}}$ to a value in $\mathcal{D}_{\mathcal{L}}$. The $(S,k)$-channel machine \redt{$\textit{CM}_i$} ($1 \leq i \leq n$) is a tuple $(Q_i,\{c_i\}, \Sigma, \Lambda,\Delta_i)$, where $c_i$ is name of the single channel of \redt{$\textit{CM}_i$}. $Q_i$, $c_i$, $\Sigma$, $\Lambda$ and $\Delta_i$ are defined as follows: Let $Q_{\mathcal{L}}$ be the set of program positions of $\mathcal{L}$ and $\rightarrow_{\mathcal{L}}$ be the transition relation of $\mathcal{L}$,
$Q_i=( \{q_c\} \cup (Q_{\mathcal{L}} \times \{q'_c\}) ) \times \textit{Val} \times \textit{Val} \times \textit{Val} \times \{ 0,1 \}$ is the state set.
A state $(q,d_c,d_g,d_b,bpflag)\in Q_i$ consists of a control state $q$,
a valuation $d_c$ of the current memory, a valuation $d_g$ of the memory with
all the stored items in $c_i$ applied, a valuation $d_b$ of the memory with
all the stored items of process $i$ in $c_i$ applied, and a flag $bpflag$ that is used to
detect if the current execution 
has already reached a
configuration that contains a blocking pair of process $i$. 

$\Sigma=\Sigma_1 \cup \Sigma_2$ is the alphabet of channel contents with
$\Sigma_1=\{ (i,x,d) \vert 1 \! \leq \! i \! \leq \! n, x \in
\mathcal{X}_{\mathcal{L}}, d \in \textit{Val} \}$ and $\Sigma_2 = \{ (a,\sharp)
\vert a \in \Sigma_1 \}$. $\Sigma_1$ represents a buffered write and it stores the whole memory valuation. $\Sigma_2$ is used to represent the newest write of a variable. In case that \redt{$\textit{CM}_i$} is interpreted with a lossy channel, $\Sigma_2$ are the sets of strong symbols of \redt{$\textit{CM}_i$} and the number of strong symbols is less or equal to the size of $\mathcal{X}_{\mathcal{L}}$.
$\Lambda$ is the set of transition labels and is the union of $\{ \epsilon \}$ and $\{ \textit{write}(i,x,d), \textit{flush}(i,x,d)$, $\textit{cas}(i,x,d,d')  \vert 1 \leq i \leq n, x \in \mathcal{X}_{\mathcal{L}}, d,d' \in 
\mathcal{D}_{\mathcal{L}} \}$. $\Lambda$ does not contain $\tau$, read, call or return transitions, which are seen as $\epsilon$ transition in \redt{$\textit{CM}_i$}.
$\Delta_i$ is the transition relation of \redt{$\textit{CM}_i$}, and it is the smallest set of transitions such that for each $q \in \{q_c\} \cup (Q_{\mathcal{L}} \times \{q'_c \}$, $q_1,q_2 \in Q_{\mathcal{L}}$ and $d_c,d_g,d_b \in \textit{Val}$,

\begin{itemize}[leftmargin=*]
\item[-] Nop: if $q_1 {\xrightarrow{ \tau }}_{\mathcal{L}} q_2$, then

$((q_1,q'_c),d_c,d_g,d_b,bpflag)
{\xrightarrow{ \epsilon, c_i:\Sigma,\textit{nop} }}_{\Delta_i}
((q_2,q'_c),d_c,d_g,d_b,bpflag')$

In the destination state, if the pair of the first or fourth tuples is a blocking pair and $bpflag=0$, then $bpflag'=1$; Otherwise, $bpflag'=bpflag$. 
The other transition rules use the same approach to modify $bpflag$ and we omit such approach when showing other transition rules.

\item[-] Library write: if $q_1 {\xrightarrow{ \textit{write}(x,a) }}_{\mathcal{L}} q_2$, then for each $d \in \textit{Val}$

\vspace{-12pt}
$$((q_1,q'_c),d_c,d_g,d_b,bpflag)
{\xrightarrow{ \textit{op}, (\alpha,\sharp) \in c_i, c_i[\alpha / (\alpha,\sharp)]!\alpha' }}_{\Delta_i}
((q_2,q'_c),d_c,d'_g,d'_b,bpflag')$$
$$((q_1,q'_c),d_c,d_g,d_b,bpflag)
{\xrightarrow{ \textit{op}, c_i:\Theta,c_i!\alpha' }}_{\Delta_i}
((q_2,q'_c),d_c,d'_g,d'_b,bpflag')$$

where $\alpha=(i,x,d)$, $d'_g = d_g[x:a]$, $d'_b=d_b[x:a]$, $\alpha' = ((i,x,d'_g),\sharp)$, $\Theta = \Sigma \backslash \{ ((i,x,d'),\sharp) \vert d' \in \textit{Val}\}$ and $\textit{op} = \textit{write}(i,x,a)$. 

\item[-] Guess write: if $1 \leq j \leq n$, $j \neq i$ and $x \in \mathcal{X}_{\mathcal{L}}$, then
$$(q,d_c,d_g,d_b,bpflag)
{\xrightarrow{ \textit{op}, c_i: \Sigma,c_i!\alpha }}_{\Delta_i}
(q,d_c,d'_g,d_b,bpflag')$$
where $d'_g=d_g[x:a]$, $\alpha = (j,x,d'_g)$ and  $\textit{op} = \textit{write}(j,x,a)$.

\item[-] Flush: for each $1 \leq j \leq n$ and $x \in \mathcal{D}_{\mathcal{L}}$,
$$(q,d_c,d_g,d_b,bpflag)
{\xrightarrow{ \textit{op}, c_i:\Sigma, c_i?(j,x,d) }}_{\Delta_i}
(q,d,d_g,d_b,bpflag')$$
$$(q,d_c,d_g,d_b,bpflag)
{\xrightarrow{ \textit{op}, c_i:\Sigma, c_i?((j,x,d),\sharp) }}_{\Delta_i}
(q,d,d_g,d_b,bpflag')$$
where $\textit{op}=\textit{flush}(j,x,d(x))$. 


\item[-] Library read: if $q_1 {\xrightarrow{ \textit{read}(x,a) }}_{\mathcal{L}} q_2$, then for each $d \in \textit{Val}$ with $d(x)=a$,
$$((q_1,q'_c),d_c,d_g,d_b,bpflag)
{\xrightarrow{ \epsilon, (\beta,\sharp) \in c_i,\textit{nop} }}_{\Delta_i}
((q_2,q'_c),d_c,d_g,d_b,bpflag')$$
$$((q_1,q'_c),d,d_g,d_b,bpflag)
{\xrightarrow{ \epsilon, c_i:\Theta,\textit{nop} }}_{\Delta_i}
((q_2,q'_c),d,d_g,d_b,bpflag')$$
where $\beta=(i,x,d)$ and $\Theta=\Sigma \backslash \{ ((i,x,d'),\sharp) \vert d' \in \textit{Val} \}$. 

\item[-] Library $\textit{cas}$: if $q_1 {\xrightarrow{ \textit{cas}\_\textit{suc}(x,a,b) }}_{\mathcal{L}} q_2$ , then for each $d \in \textit{Val}$ with $d(x)=a$,
$$((q_1,q'_c),d,d,d_b,bpflag)
{\xrightarrow{ \textit{cas}(i,x,a,b), c_i=\epsilon, \textit{nop} }}_{\Delta_i}
((q_2,q'_c),d[x:b],d[x:b],d_b[x:b],bpflag')$$


If $q_1 {\xrightarrow{ \textit{cas}\_\textit{fail}(x,a,b) }}_{\mathcal{L}} q_2$ , then for each $d \in \textit{Val}$ with $d(x) \neq a$,
$$((q_1,q'_c),d,d,d_b,bpflag)
{\xrightarrow{ \textit{cas}(i,x,a,a), c_i=\epsilon, \textit{nop} }}_{\Delta_i}
((q_2,q'_c),d,d,d_b,bpflag')$$



\item[-] Call and return: the call and return actions are modelled as $\epsilon$ transitions of \redt{$\textit{CM}_i$}:

$(q_c,d_c,d_g,d_b,bpflag)
{\xrightarrow{ \epsilon, c_i:\Sigma,\textit{nop} }}_{\Delta_i}
((\textit{is}_{(\textit{m,a})},q'_c),d_c,d_g,d_b,bpflag')$.

$((\textit{fs}_{(\textit{m,a})},q'_c),d_c,d_g,d_b,bpflag)
{\xrightarrow{ \epsilon, c_i:\Sigma,\textit{nop} }}_{\Delta_i}
(q_c,d_c,d_g,d_b,bpflag')$.

\end{itemize}

}

\forget{
\subsection{Obstruction-Freedom is Decidable}
\label{lemma:obstruction-freedom is decidable}

Let \redt{$\textit{CM}_i^w$} (\redt{$\textit{CM}_i^f$}) be a channel machine that is obtained from \redt{$\textit{CM}_i$} by
replacing all of its transitions but write (flush) and $\textit{cas}$ with internal transitions, and the remaining $\textit{cas}$ actions as write (flush) actions.

The following lemma reduces the finite trace reachability problem of Lemma \ref{lemma:reducing obstruction free to a reachability problem of ObsSem(L,n)} into checking emptiness of $\bigcap_{i=1}^n T_{( q_i,q'_i )}^{(S,k)} \redt{\textit{CM}_i^f}$. Such problem is equivalent to a control state reachability problem of a perfect channel machine that is the production of \redt{$\textit{CM}_1^f$} to \redt{$\textit{CM}_n^f$} by Lemma \ref{proposition:relation bewteen LT of M1 and M2 and (LT of M1 and LT of M2)}. The proof of this lemma can be found in Appendix \ref{subsec: proof of lemma lemma:reduce the existence of exeuction of ObsSem(L,n) to the reachability problem of production of M1f to Mnf}. 

\begin{lemma}
\label{lemma:reduce the existence of exeuction of ObsSem(L,n) to the reachability problem of production of M1f to Mnf}
Given a library $\mathcal{L}$, $\llbracket \mathcal{L}, n \rrbracket_{of}$ has an execution from $(p_{\textit{init}}, d_{\textit{init}}, u_{\textit{init}}, 0)$ to $(p, d, u_{\textit{init}}, 1)$, if and only if $\bigcap_{i=1}^n T_{( q_i,q'_i )}^{(S,k)} \redt{\textit{CM}_i^f} \neq \emptyset$. Here for each process $1 \! \leq \!i \! \leq \! n$, $q_i=( p_{\textit{init}}(i), d_{\textit{init}}, d_{\textit{init}}, d_{\textit{init}}, 0)$, $q'_i=( p(i), d, d, d_{bi}, bpflag_i)$ for some $d_{bi}$ and $bpflag_i$. Moreover, $bpflag_j=1$ for some process $j$.
\end{lemma}

The following lemma reduces the control state reachability problem of the production of \redt{$\textit{CM}_1^f$} to \redt{$\textit{CM}_n^f$} as a perfect channel machine to the control state reachability problem of the production of \redt{$\textit{CM}_1^w$} to \redt{$\textit{CM}_n^w$} as a perfect channel machine. Its proof can be found in Appendix \ref{subsec:proof of lemma lemma:reduce the reachability problem of production of M1w to Mnw as perfect channel machine to that of M1f to Mnf}.

\begin{lemma}
\label{lemma:reduce the reachability problem of production of M1w to Mnw as perfect channel machine to that of M1f to Mnf}

$\bigcap_{i=1}^n T_{( q_i,q'_i )}^{(S,k)} \redt{\textit{CM}_i^f} \neq \emptyset$, if and only if $\bigcap_{i=1}^n T_{( q_i,q'_i )}^{(S,k)} \redt{\textit{CM}_i^w} \neq \emptyset$. Here for each process $1 \! \leq \!i \! \leq \! \textit{n+1}$, $q_i=( p_{\textit{init}}(i), d_{\textit{init}}, d_{\textit{init}}, d_{\textit{init}}, 0)$, $q'_i=( p(i), d, d, d_{bi}, bpflag_i)$ for some $d_{bi}$ and $bpflag_i$.
\end{lemma}

The following lemma reduces the control state reachability problem of the
production of \redt{$\textit{CM}_1^w$} to \redt{$\textit{CM}_n^w$} as a perfect channel machine to the control
state reachability problem of the production of \redt{$\textit{CM}_1^w$} to \redt{$\textit{CM}_n^w$} as a lossy
channel machine. 
Its proof can be found in Appendix \ref{subsec:proof of lemma lemma:reduce the reachability problem of production of M1w to Mnw as perfect channel machine to that of lossy channel machine}.

\begin{lemma}
\label{lemma:reduce the reachability problem of production of M1w to Mnw as perfect channel machine to that of lossy channel machine}

$\bigcap_{i=1}^n T_{( q_i,q'_i )}^{(S,k)} \redt{\textit{CM}_i^w} \neq \emptyset$, if and only if $\bigcap_{i=1}^n LT_{( q_i,q''_i )}^{(S,k)} \redt{\textit{CM}_i^w} \neq \emptyset$. Here for each process $1 \! \leq \!i \! \leq \! \textit{n+1}$, $q_i=( p_{\textit{init}}(i), d_{\textit{init}}, d_{\textit{init}}, d_{\textit{init}}, 0)$, $q'_i=( p(i), d, d, d_{bi}, bpflag'_i)$, $q''_i=( p(i), d, d, d_{bi}, bpflag''_i)$ for some $d_{bi}$, $bpflag'_i$ and $bpflag''_i$. Moreover, $bpflag'_j=1$ for some $j$, if and only if $bpflag''_k=1$ for some $k$.
\end{lemma}

The following theorem states that obstruction-freedom is decidable on TSO for $n$ processes. This is a direct consequence of Lemma \ref{lemma:equivalent characterization of obstruction-freedom} 
to Lemma \ref{lemma:reduce the reachability problem of production of M1w to Mnw
  as perfect channel machine to that of lossy channel machine}, as well as the
decidability result of the control state reachability problem of lossy channel
machines \cite{DBLP:conf/popl/AtigBBM10}, and the fact that there are only
a finite number of such $p$, $d$ and $d_{bi}$.

\begin{theorem}
\label{theorem:obstruction-freedom is decidable}
The problem of checking obstruction-freedom of a given library 
for bounded number of processes is decidable.
\end{theorem}
}

\forget{
Note that it seems hard to directly reduce checking blocking pairs into state
reachability, since the latter problem requires each process to have an empty
buffer. Thus, we need to force each process to clear their buffer and this may
change the memory valuation. To check blocking pairs we need to recover memory
valuations, and thus, we require each process to state its previous buffer
content, which seems infeasible on TSO.
}

\forget{
\subsection{Obstruction-Freedom is Decidable}
\label{lemma:obstruction-freedom is decidable}
}

\forget{
Checking the existence of blocking pairs requires atomically reading the whole memory valuation, while we only have commands to atomically read one memory location. Although it seems hard to atomically read the whole memory valuation, in this subsection we propose a method to generate a basic TSO concurrent system 
and reduce checking blocking pairs into the state reachability problem of this basic TSO concurrent system.
}

\forget{
Informally, the desired basic TSO concurrent system is obtained from $\llbracket
\mathcal{L}, n \rrbracket$ by transforming call and return actions into internal
actions, and additionally allowing each process to non-deterministically check
blocking pairs.
When a process finishes checking it records the result in a new memory location
and then terminates the whole basic TSO concurrent system.
In the check procedure we need to read the whole memory
valuation, and this procedure 
is performed in two phases. In the first phase a process reads from each memory
location $x \in \mathcal{X}$ and writes new special values (depending on the
value of $x$ and process ID) to mark them. In the second phase the process
checks if the value of each memory location of $\mathcal{X}$ has not been
overwritten by actions (including flush actions) of other processes.
}


\forget{
Assume that the data domain $\mathcal{D}=\{a_1,\ldots,a_u\}$, then we introduce new values $\{b_{i,j}\vert 1\leq i\leq n, 1\leq j\leq u\}$ which are used to mark memory locations of $\mathcal{X}$. We introduce a new memory location $\textit{result}$ with initial value $0$ to store the result of checking blocking pairs. 
We introduce a new memory location $\textit{terFlag}$ with initial value $0$, and we use
it as a flag to denote termination of the check 
procedure, as well as the whole
basic TSO concurrent system.
}

\forget{
Formally, the tuple $(Q_{proc}, \Delta_{proc})$ of process $proc$ of the basic
TSO concurrent system for a given library $\mathcal{L}$ =
$(\mathcal{X}_{\mathcal{L}},\mathcal{M}, \mathcal{D},
Q_\mathcal{L},\rightarrow_\mathcal{L})$ is obtained as follows. Each control
state of $Q_{proc}$ is either a tuple $(\textit{in}_{\textit{clt}},proc,x)$ or
$(q_{\mathcal{L}},\textit{in}_{\textit{lib}},proc,x)$ (with $q_{\mathcal{L}} \in Q_\mathcal{L}$ and $x
\in \{1,2\}$), or states of $\textit{CheckBP}$ (described below). States with
$x=1$ can first check $\textit{terFlag}$ (and set $x$ into 2 if success) and
then do transitions according to $\rightarrow_{\mathcal{L}}$. Here
$x$ is used to check $\textit{terFlag}$ before doing ``library
transitions''. 
}


\forget{
The transition relation $\Delta_{proc}$ 
is defined as follows: Here we use $q {\xrightarrow{\alpha}}_{\Delta_{proc}} q'$ to denote that $(q,\alpha,q') \in \Delta_{proc}$.

\begin{itemize}
\item[-] 
$(q_{\mathcal{L}},\textit{in}_{\textit{lib}},proc,1)$ can first check $\textit{terFlag}$ value and then do transitions according to $\mathcal{L}$. Formally, if $\alpha$ is a $\tau$, 
read, write or $\textit{cas}$ action and $q_{\mathcal{L}} {\xrightarrow{\alpha}}_{\mathcal{L}} q'_{\mathcal{L}}$ in $\mathcal{L}$, then we have $(q_{\mathcal{L}},\textit{in}_{\textit{lib}},proc,1)$ ${\xrightarrow{\textit{read}(proc, \textit{terFlag}, 0)}}_{\Delta_{proc}}$ $(q_{\mathcal{L}},\textit{in}_{\textit{lib}},proc,2)$ and $(q_{\mathcal{L}},\textit{in}_{\textit{lib}},proc$, $2) {\xrightarrow{\alpha'}}_{\Delta_{proc}} (q'_{\mathcal{L}},\textit{in}_{\textit{lib}},proc,1)$ in $\Delta_{proc}$, where $\alpha'$ is obtained from $\alpha$ by adding process ID $proc$.

\item[-] To deal with call actions, we first check $\textit{terFlag}$ and then transform call actions into $\tau$ actions. Formally, we have $(\textit{in}_{\textit{clt}},proc,1) {\xrightarrow{\textit{read}(proc, \textit{terFlag}, 0)}}_{\Delta_{proc}} (\textit{in}_{\textit{clt}},proc,2)$ and $(\textit{in}_{\textit{clt}},proc,2)$ ${\xrightarrow{\tau(proc)}}_{\Delta_{proc}} (\textit{is}_{(\textit{m,a})},\textit{in}_{\textit{lib}},proc,1) )$ in $\Delta_{proc}$. The case for return actions is similar and we have $(\textit{fs}_{(\textit{m,a})}, \textit{in}_{\textit{lib}}, proc,1) {\xrightarrow{\textit{read}(proc, \textit{terFlag}, 0)}}_{\Delta_{proc}} (\textit{fs}_{(\textit{m,a})},\textit{in}_{\textit{lib}},proc,2)$ and $(\textit{fs}_{(\textit{m,a})},\textit{in}_{\textit{lib}},proc,2)$ \\ ${\xrightarrow{\tau(proc)}}_{\Delta_{proc}}$ $(\textit{in}_{\textit{clt}},proc, 1)$ in $\Delta_{proc}$.





\item[-] $(\textit{in}_{\textit{clt}},proc,1)$ and $(q_{\mathcal{L}},\textit{in}_{\textit{lib}},proc,1)$ can
  non-deterministically 
  decide to execute $\textit{CheckBP}$ (described below).
\end{itemize}

We denote by $\textit{CheckBP}$ the sequence of transitions that are used to check blocking pairs. 
If $\textit{CheckBP}$ is started from $(\textit{in}_{\textit{clt}},proc,1)$ (resp., $(q_{\mathcal{L}},\textit{in}_{\textit{lib}},proc,1)$), then $\textit{CheckBP}$ works as follows:

\begin{itemize}
\item[-] 
  Perform a $\textit{cas}$ command to clear the buffer.

\item[-] For each $x \in \mathcal{X}$, read a value from $x$ (we assume the
  value is $a_j$), and then use $\textit{cas}\_\textit{suc}(
  x,a_j, $ $b_{proc,j})$ to write a special value to $x$.

\item[-] Read the value of each $x \in \mathcal{X}$ again and check if 
  these value belongs to $\{b_{proc,j}\vert 1\leq j\leq u\}$. If so, check if 
  $\textit{in}_{\textit{clt}}$ (resp., $(q_{\mathcal{L}},\textit{in}_{\textit{lib}})$) and $f(x)$ for $x \in \mathcal{X}$ is a blocking pair. Here $f$ is a function that maps each $b_{proc,j}$ into $a_j$. If it is a blocking pair, then we set $\textit{result}$ to $1$.

\item[-] Set $\textit{terFlag}$ to $1$.
\end{itemize}
}

Since the model checking problem for CTL$^*$ formulas is decidable for finite state LTSs \cite{DBLP:reference/mc/2018}, we could compute the set of blocking pairs by first enumerating all
configurations of $\llbracket \mathcal{L}, 1 \rrbracket_{sc}$, and then use
model checking to check each of them. Thus, the configurations of the state reachability problem of Lemma \ref{lemma:equivalent characterization of obstruction-freedom} is computable. Since the the state reachability problem is decidable, we conclude that obstruction-freedom is decidable, as stated by the following theorem.

\forget{
With this basic TSO concurrent system, we reduce checking blocking pairs into checking if some configuration with $\textit{result}=1$ is reachable on the basic TSO concurrent system, which is known decidable \cite{DBLP:conf/popl/AtigBBM10}. The following theorem states that obstruction-freedom is decidable on TSO for $n$ processes. 
The proof can be found in Appendix \ref{subsec:appendix proof of theorem theorem:obstruction-freedom is decidable}. 
}
\begin{theorem}
  \label{theorem:obstruction-freedom is decidable}
  The problem of checking obstruction-freedom of a given library for bounded number of processes is decidable on TSO.
\end{theorem}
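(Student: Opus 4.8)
The plan is to assemble Lemma~\ref{lemma:equivalent characterization of obstruction-freedom} with the decidability of the state reachability problem for basic TSO concurrent systems. By that lemma, obstruction-freedom is violated for $\mathcal{L}$ and $n$ processes if and only if $\llbracket \mathcal{L}, n \rrbracket$ has a finite execution reaching a configuration $(p,d,u_{\textit{init}})$ in which $(p(\mathit{proc}),d)$ is a blocking pair for some process $\mathit{proc}$. Since the destination configuration already has an empty buffer for every process, the gap between our infinite-execution condition and the finite, empty-buffer reachability queries of Atig \emph{et al.}~\cite{DBLP:conf/popl/AtigBBM10} has been closed. It therefore remains to compute the finite set of target configurations and then phrase the existence of an execution reaching one of them as a state reachability problem on a basic TSO concurrent system.

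First I would compute the set of blocking pairs. Because $\mathcal{X}_{\mathcal{L}}$, $\mathcal{D}$, $\mathcal{M}$ and $Q_{\mathcal{L}}$ are all finite and $\llbracket \mathcal{L}, 1 \rrbracket_{sc}$ uses no buffers, $\llbracket \mathcal{L}, 1 \rrbracket_{sc}$ is a finite-state LTS. A pair $(q,d)$ is a blocking pair exactly when some configuration of $\llbracket \mathcal{L}, 1 \rrbracket_{sc}$ with process $1$ in state $q$ and memory $d$ satisfies the CTL$^*$ formula $E((G\ \neg P_{\textit{ret}}) \wedge (G\ X\ P_{\textit{any}}))$. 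Since CTL$^*$ model checking is decidable on finite-state LTSs~\cite{DBLP:reference/mc/2018}, I would enumerate the finitely many candidate pairs $(q,d)$ and model-check the formula at each, obtaining a finite, computable set $B$ of blocking pairs.

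Next I would turn $\llbracket \mathcal{L}, n \rrbracket$ into a basic TSO concurrent system in the sense of Section~\ref{subsec:the TSO concurrent system of DBLP:conf/popl/AtigBBM10}, resolving the fact that those systems have neither libraries nor call/return actions. Each process is given a finite control whose states are those of the most general client together with the library program positions $Q_{\mathcal{L}}$; I would let every process nondeterministically call an arbitrary method with an arbitrary argument arbitrarily often, and rewrite each $\textit{call}$ and $\textit{return}$ action as an internal $\tau$ action, leaving read, write and $\textit{cas}$ actions untouched. Because these rewrites only relabel transitions as internal and never alter the memory valuation, reachability of any empty-buffer configuration $(p,d,u_{\textit{init}})$ is preserved between $\llbracket \mathcal{L}, n \rrbracket$ and the constructed basic system.

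Finally, for each process $\mathit{proc}$ and each blocking pair $(q,d)\in B$, I would pose the state reachability query asking whether the basic system can move from $(p_{\textit{init}},d_{\textit{init}},u_{\textit{init}})$ to some $(p',d,u_{\textit{init}})$ with $p'(\mathit{proc})=q$. There are finitely many such queries, each decidable by Atig \emph{et al.}~\cite{DBLP:conf/popl/AtigBBM10}; obstruction-freedom is violated precisely when one of them succeeds, so obstruction-freedom is decidable. The main obstacle I expect is the faithful embedding of the library-based system into a basic TSO concurrent system: one must verify that treating call and return as internal actions, and letting the client drive arbitrary calls, neither creates nor destroys any empty-buffer configuration relevant to a blocking pair, so that the reachability queries are provably equivalent to the condition of Lemma~\ref{lemma:equivalent characterization of obstruction-freedom}.
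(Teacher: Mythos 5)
Your proposal is correct and follows essentially the same route as the paper: it invokes Lemma~\ref{lemma:equivalent characterization of obstruction-freedom} to reduce the problem to reaching an empty-buffer configuration containing a blocking pair, computes the finite set of blocking pairs by CTL$^*$ model checking on the finite-state LTS $\llbracket \mathcal{L}, 1 \rrbracket_{sc}$, and embeds the library system into a basic TSO concurrent system (turning call/return into internal actions and letting the most general client drive arbitrary calls) so that finitely many state reachability queries of Atig \emph{et al.} decide the property. The paper's own argument is exactly this assembly, so no further comparison is needed.
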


\section{Conclusion}
\label{sec:conclusion}

Liveness is an important property of programs, and using objects 
with
incorrect liveness assumptions can cause problematic behaviors.
In this paper, we prove that lock-freedom, wait-freedom, deadlock-freedom and
starvation-freedom are undecidable on TSO for a bounded number of processes by reducing a known undecidable problem of lossy channel machines to checking liveness properties of specific libraries.
This library simulates the lossy channel machine $\textit{CM}'_{(A,B)}$ and is designed to contain at most two kinds of executions: If 
methods collaborate in a fair way and the lossy channel machine execution being simulated visits state $s_1$ infinitely many times, then the library executions violate all four liveness properties; otherwise, the library executions satisfy all four liveness properties.
Therefore, one library is sufficient for the undecidability proof of four liveness properties.
Our undecidability proof reveals the intrinsic difference in liveness
verification between TSO and SC, resulting from the unbounded size of store buffers
in the TSO memory model.

Perhaps unexpectedly, we show that obstruction-freedom is decidable. 
Since each violation of obstruction-freedom eventually runs in isolation,
from some time point the violation, running on TSO, has the same behavior as on SC.
Therefore, checking whether a configuration contains a potential violation can be done by checking only this configuration itself, instead of checking all infinite executions from this configuration.
Checking obstruction-freedom is thus reduced to a known decidable reachability problem.
\forget{Our decidability results for liveness are obtained by reducing liveness to some decidable reachability problem. For the case of obstruction-freedom, each violation is of infinite length.
However, such executions run in isolation eventually, which causes to eventually
run as on the SC memory model.
Thus, whether a configuration contains a potential violation is computable.
The case of bounded wait-freedom is easier since we only need to check a finite prefix to find the violation.
We also prove that wait-freedom for libraries on TSO may not have a bound on steps for a bounded number of processes, while a wait-free library on SC must have bound on steps for a bounded number of processes.}

Other relaxed memory models, such as the memory models of ARM and POWER, 
are much weaker than TSO.
We conjecture that the undecidable liveness properties on TSO are still undecidable on ARM and POWER.
As future work, we would like to investigate the decidability of
obstruction-freedom 
on more relaxed memory models.
There are variants of liveness properties, such as $k$-bounded lock-freedom, bounded lock-freedom, $k$-bounded wait-freedom and bounded wait-freedom \cite{DBLP:conf/pldi/PetrankMS09}.
We would also like to investigate the decidability of bounded version of liveness properties on TSO and more relaxed memory models.





\bibliography{reference}
\bibliographystyle{plain}






\end{document}